\newcommand{\CB}{\ensuremath{\text{CB}}}
\newcommand{\diagset}{\bar{\mathbb{X}}_\mathrm{D}}
\newcommand{\favset}{\bar{\mathbb{X}}_\mathrm{F}}
\newcommand{\unfavset}{\bar{\mathbb{X}}_\mathrm{U}}
\newcommand{\adjset}{\bar{\mathbb{X}}_{adj}}
\newcommand{\xifavtomeet}{\xi_{\mathrm{F}\to\mathrm{D}}}
\newcommand{\xiunfavtofav}{\xi_{\mathrm{U}\to\mathrm{F}}}
\newcommand{\xifavtounfav}{\xi_{\mathrm{F}\to\mathrm{U}}}
\newtheorem{theorem}{Theorem}
\theoremstyle{definition}
\newtheorem{assumption}{Assumption}[section]
\newtheorem{proposition}{Proposition}[section]
\theoremstyle{remark}
\title{A simple Markov chain for independent Bernoulli variables \\ conditioned on their sum}
\author[1]{Jeremy Heng}
\author[2]{Pierre E. Jacob}
\author[2]{Nianqiao Ju \thanks{Corresponding author: nju@g.harvard.edu}}
\date{}
\affil[1]{ESSEC Business School, Singapore}
\affil[2]{Department of Statistics, Harvard University, USA}
\begin{document}
\maketitle

\begin{abstract}
  We consider a vector of $N$ independent binary variables, each
  with a different probability of success.  The distribution of the vector
  conditional on its sum is known as the conditional
  Bernoulli distribution. Assuming that $N$ goes to infinity
  and that the sum is proportional to $N$, exact sampling 
  costs order $N^2$, while a simple Markov chain Monte Carlo
  algorithm using ``swaps'' has constant cost per iteration.  
  We provide conditions under which this Markov chain converges in order $N \log N$ iterations.
  Our proof relies on couplings and an auxiliary Markov chain defined on 
  a partition of the space into favorable and
  unfavorable pairs. 
\end{abstract}

\section{Sampling from the conditional Bernoulli distribution}

\subsection{Problem statement}

Let $x=(x_{1},\ldots,x{}_{N})$ be an $N$-vector in $\{0,1\}^{N}$, with sum
$\sum_{n=1}^{N}x_{n}=I$. Let $(p_{1},\ldots,p_{N}) \in (0,1)^N$,
 and denote the associated ``odds'' by $w_{n}=p_{n}/(1-p_{n})$.
Define the set $S_{z}=\{n\in[N]:x_{n}=z\}$ for $z\in\{0,1\}$, where $[N]=\{1,\ldots,N\}$,
i.e. $S_z$ has indices $n\in[N]$ at which $x_n=z$. We consider the
task of sampling $x\in\{0,1\}^N$ from a distribution obtained by specifying an 
independent Bernoulli distribution with probability $p_n$ 
on each component $x_n$, and conditioning on $\sum_{n=1}^{N}x_{n}
= I$ for some value $0\leq I\leq N$. This is known as the conditional Bernoulli
distribution and will be denoted by $\CB(p,I)$. 
The support of
$\CB(p,I)$ is denoted as $\mathbb{X}=\{x\in\{0,1\}^N: \sum_{n=1}^N x_n = I\}$. 
We assume that $1\leq I \leq N/2$, 
since 
we can always swap the labels ``0''and ``1''. 
We consider the asymptotic regime where $N$ and $I$ go to infinity at the same rate.

One can sample exactly from $\CB(p,I)$
\citep{chen1994weighted,chen1997statistical}, for a cost of order $IN$, 
thus order $N^2$ in the context of interest here; see Appendix \ref{appx:exactsampling}. 
As an alternative, we consider a simple Markov chain Monte 
Carlo (MCMC) algorithm that leaves $\CB(p,I)$ invariant 
\citep{chen1994weighted,liu1995bayesian}. 
Starting from an arbitrary state $x\in\mathbb{X}$, this MCMC
performs the following steps at each iteration.
\begin{enumerate}
\item Sample $i_{0}\in S_{0}$ and $i_{1} \in S_{1}$ uniformly, independently from one another.
\item Propose to set $x_{i_{0}} = 1$ and $x_{i_{1}} = 0$, and accept with probability 
$\min(1,w_{i_{0}}/w_{i_{1}})$.
\end{enumerate}
By keeping track of the sets $S_0$ and $S_1$, 
the algorithm can be implemented using a constant cost per iteration.
The purpose of this article is to show that this Markov chain
converges to its target distribution $\CB(p,I)$
in the order of $N \log N$ iterations, under mild conditions on $p$ and $I$. 
As the cost per iteration is constant, this provides an overall competitive scheme to 
sample from $\CB(p,I)$.


\subsection{Approach and related works}

We denote the transition kernel of the above Metropolis--Hastings algorithm by $P(x,\cdot)$,
and a Markov chain generated using the algorithm by $(x^{(t)})_{t\geq0}$,
starting from $x^{(0)}\sim\pi_{0}$.  
The initial distribution $\pi_0$ could correspond to setting
$I$ components of $x^{(0)}$ to $1$, chosen uniformly without replacement, or setting
$x_i = 1$ for $i=1,\ldots,I$ and the other components to $0$. 

If all probabilities in $p$ are identical, the chain is equivalent to the
Bernoulli--Laplace diffusion model, which is well-studied
\citep{diaconis1987time,donnelly1994approach,eskenazis2020}.  In particular,
\citet{diaconis1987time} showed that mixing of the chain occurs in the order of
$N\log N$ iterations when $I$ is proportional to $N$, via a Fourier analysis of the
group structure of the chain. A mixing time of the same order can be obtained
with a simple coupling argument \citep{guruswami2000rapidly}.  Here we consider
the case where $p$ is a vector of realizations of random variables in $(0,1)$,
and provide conditions under which the mixing time remains of order $N\log N$.
As we will see in Section \ref{sec:cvgcoupling}, the coupling argument alone
falls apart in the case of unequal probabilities $p$, 
but can be successfully 
combined with a partition of the pair of state spaces into favorable and
unfavorable pairs, to be defined in Section \ref{sec:partition}.
Bounds on the transitions from parts of the space are used to 
define a simple Markov chain on the partition labels, which allows
us to obtain our bounds in Section \ref{sec:mixingtime}.

The problem of sampling $\CB(p,I)$ has 
various applications, such as survey sampling \citep{chen1994weighted},
hypothesis testing in logistic regression \citep{chen1997statistical,brostrom2000acceptance}, 
testing the hypothesis of proportional hazards \citep{brostrom2000acceptance},
and sampling from a determinantal point process 
\citep{hough2006determinantal,kulesza2012determinantal}. 



\section{Convergence rate via couplings \label{sec:cvgcoupling}}

\subsection{General strategy}

Consider two chains $(x^{(t)})$ and $(\tilde{x}^{(t)})$, each marginally
evolving according to $P$, with initialization $x^{(0)}\sim \pi^{(0)}$ and
$\tilde{x}^{(0)}\sim \CB(p,I)$.  Define the sets
$\tilde{S}_{z}=\{n\in[N]:\tilde{x}_{n}=z\}$ for $z=0,1$.  The Hamming distance
between two states $x$ and $\tilde{x}$ is $d(x,\tilde{x}) = \sum_{n=1}^N
\mathds{1}(x_n\neq \tilde{x}_n)$.  Since $x,\tilde{x}\in\mathbb{X}$ sum to $I$,
the distance $d(x,\tilde{x})$ must be an even number.
If $d(x,\tilde{x})=D$ then
$|S_{0}\cap\tilde{S}_{0}|=N-I-D/2$, $|S_{1}\cap\tilde{S}_{1}|=I-D/2$ 
and $|\tilde{S}_{0}\cap S_{1}|=|S_{0}\cap\tilde{S}_{1}|=D/2$, 
where $|\cdot|$ denotes the cardinality of a set. 

Let $d^{(t)}$ denote the distance between $x^{(t)}$ and $\tilde{x}^{(t)}$
at iteration $t$.
Following e.g. \citet[Section 6.2,][]{guruswami2000rapidly}, in the case of
identical probabilities $p=(p_1,\ldots,p_N)$, a path coupling strategy
\citep{bubley1997path} gives an accurate upper bound on the mixing time of the chain.
The strategy is to study the distance $d^{(t)}$ as the
iterations progress. 
Denote the total variation distance between the law of $x^{(t)}$ and its limiting distribution
$\CB(p,I)$ by $\|x^{(t)}-\CB(p,I)\|_{\text{TV}}$.
By the coupling inequality and Markov's inequality,
\begin{align*}
  \|x^{(t)}-\CB(p,I)\|_{\text{TV}} \leq \mathbb{P}\left(x^{(t)}\neq\tilde{x}^{(t)}\right) & =\mathbb{P}\left(d^{(t)}>0\right)
  \leq\mathbb{E}\left[d^{(t)}\right].
\end{align*}
If the contraction
$\mathbb{E}[d^{(t)}|x^{(t-1)}=x, \tilde{x}^{(t-1)}=\tilde{x}]\leq(1-c)d^{(t-1)}$
holds for all $x,\tilde{x}\in\mathbb{X}$
with $c\in(0,1)$, by induction this implies that $\|x^{(t)}-\CB(p,I)\|_{\text{TV}}$ is less than $(1-c)^t \mathbb{E}[d^{(0)}]$. 
Noting that $\mathbb{E}[d^{(0)}]\leq N$ and writing $\kappa = -(\log(1-c))^{-1}>0$, 
an upper bound on the $\epsilon$-mixing time, defined as the first time
$t$ at which $\|x^{(t)}-\CB(p,I)\|_{\text{TV}}\leq \epsilon$,
is given by $\kappa \log(N/\epsilon)$. 
%
Thus we seek a contraction result, with $c$ as large as possible. 

Instead of considering all pairs $(x,\tilde{x})\in\mathbb{X}^2$, 
the path coupling argument allows us to restrict our attention 
to contraction from pairs of adjacent states.
We write the set of adjacent states as
$\adjset = \{(x,\tilde{x})\in\mathbb{X}^2: d(x,\tilde{x})=2\}$;
see Appendix \ref{sec:pathcoupling} for more details on path coupling.

\subsection{Contraction from adjacent states \label{sec:contraction}}

We now introduce a coupling $\bar{P}$ of $P(x,\cdot)$ and $P(\tilde{x},\cdot)$,
for any pair $(x,\tilde{x})\in\mathbb{X}^2$, although we will primarily be interested in 
the case $(x,\tilde{x})\in\adjset$. 
First, sample $i_{0},\tilde{i}_{0}$
from the following maximal coupling of the uniform distributions on $S_{0}$ and 
$\tilde{S}_{0}$:
\begin{enumerate}
\item with probability $|S_{0}\cap\tilde{S}_{0}|/(N-I)$, sample $i_{0}$
uniformly in $S_{0}\cap\tilde{S}_{0}$ and set $\tilde{i}_{0}=i_{0}$,
\item otherwise sample $i_{0}$ uniformly in $S_{0}\setminus\tilde{S}_{0}$
and $\tilde{i}_{0}$ uniformly in $\tilde{S}_{0}\setminus S_{0}$,
independently. 
\end{enumerate}
We then sample $i_{1},\tilde{i}_{1}$ with a similar coupling, independently of the pair $(i_{0},\tilde{i}_{0})$. 
Using these proposed indices, 
swaps are accepted or rejected using a common uniform random number.
These steps define a coupled transition kernel $\bar{P}((x,\tilde{x}),\cdot)$. 

Under $\bar{P}$, the distance between the chains 
can only decrease, so for $(x',\tilde{x}') \sim \bar{P}((x,\tilde{x}),\cdot)$
from $(x,\tilde{x}) \in \adjset$, the distance $d(x',\tilde{x}')$
is either zero or two. We denote the expected 
contraction from $(x,\tilde{x})$ by $c(x,\tilde{x})$, i.e. 
$\mathbb{E}[d(x',\tilde{x}')|x,\tilde{x}] = (1-c(x,\tilde{x})) d(x,\tilde{x})$.
In the case $(x,\tilde{x}) \in \adjset$,
we denote by $a$ the single index at which $x_{a}=0,\tilde{x}_{a}=1$, 
and by $b$ the single index at which $x_{b}=1,\tilde{x}_{b}=0$.
An illustration of such states is in Table \ref{table:adjset}.
Up to a re-labelling of $x$ and $\tilde{x}$, we can 
assume $w_a \leq w_b$. By considering all possibilities when propagating 
$(x,\tilde{x}) \in \adjset$ through $\bar{P}$, we find that
\begin{align}
  \label{eq:contractionrate}
  c(x,\tilde{x}) & =\frac{1}{N-I}\times\frac{1}{I}\times \left[\left|1 - \frac{w_a}{w_b}\right|
  +\sum_{i_{1}\in S_{1}\cap\tilde{S}_{1}}\min\left(1,\frac{w_{a}}{w_{i_{1}}}\right)+\sum_{i_{0}\in S_{0}\cap\tilde{S}_{0}}\min\left(1,\frac{w_{i_{0}}}{w_{b}}\right)\right].
\end{align}
The derivation of \eqref{eq:contractionrate} is in Appendix
\ref{appx:contractionrate}. The next question is whether this contraction rate
$c(x,\tilde{x})$ can be lower bounded by a quantity of order $N^{-1}$; if this
is the case, a mixing time of order $N\log N$ would follow. 

\begin{table}
\[
\begin{array}{cccccccccccc}
 &1&  &   &  & a & b& & & &  & N\\
 x &0 & \ldots & 1 & \ldots & 0 & 1 &\ldots & 1 & 0 & \ldots & 0 \\
 \tilde{x} & 0  & \ldots & 1 & \ldots & 1 & 0 & \ldots & 1 & 0 & \ldots & 0
\end{array}
\]
\caption{Adjacent states $(x,\tilde{x})\in \adjset$.
They differ at indices $a$ and $b$ only, with $x_a=\tilde{x}_b=0$
and $x_b = \tilde{x}_a = 1$. The other components of $x$ and $\tilde{x}$
are identical, and equal to $0$ or $1$.
\label{table:adjset}}
\end{table}

\subsection{Shortcomings}

If the probabilities $p$ are identical, $c(x,\tilde{x})$ simplifies to 
$(N-2)/\{(N-I)I\}$ for all $(x,\tilde{x})\in \adjset $.
Assuming that $I \propto N$, this is of order $N^{-1}$ 
and leads to a mixing time in $N \log N$ \citep{guruswami2000rapidly}.
It follows from \citet{diaconis1987time} that this contraction rate is sharp in its dependency on $N$. 
The same conclusion holds 
in the case where $(p_n)$ are not identical but are bounded away from $0$ and $1$,
i.e. $w_{n}\in [w_{lb},w_{ub}]$ 
with $0<w_{lb}<w_{ub}<\infty$ independent of $N$.
In that case, we obtain the rate $c(x,\tilde{x}) \geq (N-2)/\{(N-I)I\} w_{lb}/w_{ub}$, 
which worsens as the ratio $w_{lb} / w_{ub}$ gets smaller. 

The main difficulty addressed in this article arises when $\min_n p_n$ and
$\max_n p_n$ get arbitrarily close to $0$ and $1$ as $N$ increases.
This scenario is common, for example if $(p_n)$ are independent Uniform$(0,1)$, 
we have $\min_n w_{n} \sim N^{-1}$ and $\max_n w_{n} \sim N$.  
Thus for $w_a = \min_n w_{n}$ and $w_b = \max_n w_{n}$, the contraction in
$\eqref{eq:contractionrate}$ can be of order $N^{-2}$ when $I \propto N$,
which leads to an upper bound on the mixing time of order $N^2 \log N$. 
To set our expectations appropriately, we follow the approach of
\citet{biswas2019estimating} to obtain empirical upper bounds on the mixing
time as $N$ increases.  
Details of the approach, which itself is based on couplings, are given in 
Appendix \ref{appx:empiricalmixing}.  
Figure \ref{fig:mixingtimeNlogN} shows the estimated upper bound on the
mixing time, divided by $N \log N$, as a function of $N$, when $(p_n)$
are generated (once for each value of $N$) from independent Uniform(0,1) and $I$ is set to $N/2$.  The
figure suggests that the mixing time might scale as $N \log N$. 

\begin{figure}[t]
\begin{centering}
    \subfloat[Meeting times $/ N\log N$]{\begin{centering}
      \includegraphics[width=0.45\textwidth]{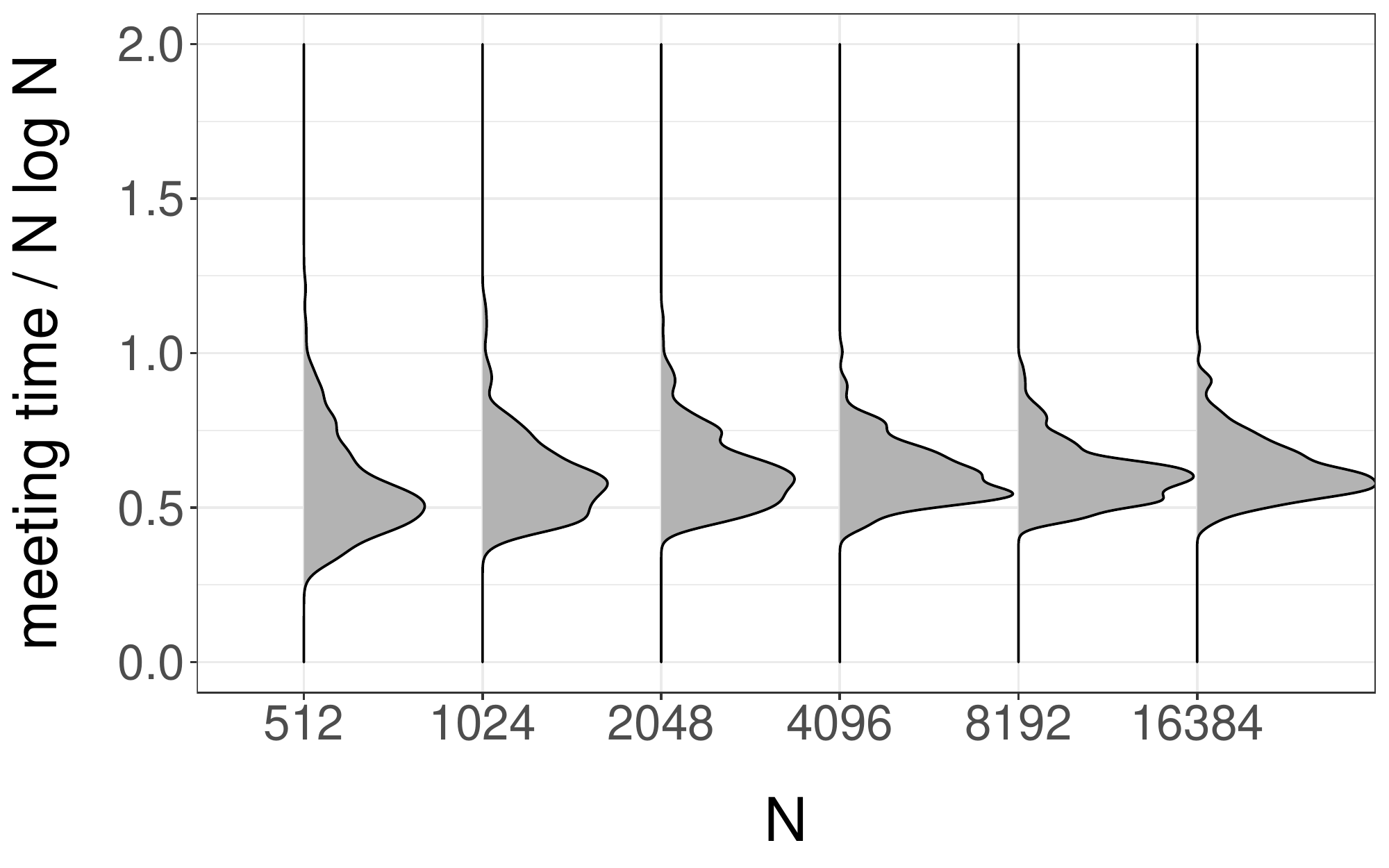}
\par\end{centering}
}
\hspace*{1cm}
\subfloat[Estimated $1\%$-mixing times $/ N\log N$]{\begin{centering}
    \includegraphics[width=0.45\textwidth]{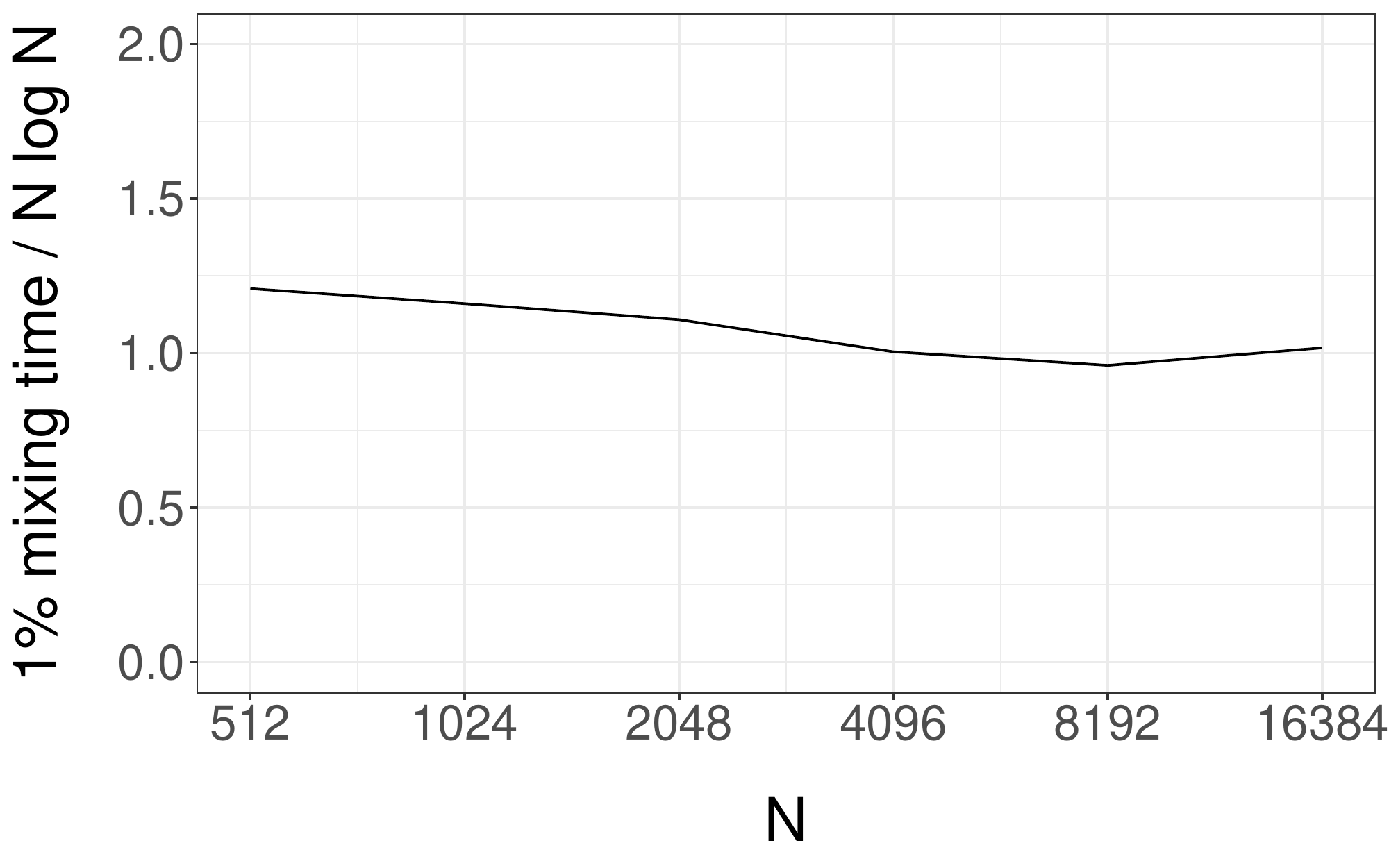}
\par\end{centering}
}
\par\end{centering}
\caption{
  Meeting times (\emph{left}) and estimated upper bounds  on the mixing time of the chain $x^{(t)}$ 
  targeting $\CB(p,I)$ (\emph{right}),
  divided by $N\log N$, against $N$. Here the probabilities $p$ are independent Uniform(0,1)
and $I=N/2$.
  \label{fig:mixingtimeNlogN}
}
\end{figure}

Our contribution is to refine the coupling argument in order to
establish an upper bound on the mixing time of order $N\log N$, under
conditions which allow for example $(p_n)$ to be independent
Uniform(0,1).  A practical consequence of our result stated in Section
\ref{sec:mixingtime} is that the simple MCMC algorithm is competitive compared to
exact sampling strategies for $\CB(p,I)$. 

\section{Proposed analysis}

\subsection{Favorable and unfavorable states \label{sec:partition}}

In the worst case scenario, $w_a$ might be of order $N^{-1}$ and $w_b$ of order $N$,
resulting in a rate $c(x,\tilde{x})$ of order $N^{-2}$. However, this is not
necessarily typical of a pair of states $(x,\tilde{x})\in\adjset$.  
This prompts us to partition $\adjset$ into ``unfavorable'' states, from which their probability
of contracting is smaller than order $N^{-1}$, and ``favorable'' states, from which
meeting occurs with probability of order $N^{-1}$.  
The precise definition of this partition will be made in relation to the odds $(w_n)$. 
Since $c(x,\tilde{x})$ in \eqref{eq:contractionrate} depends on $(w_n)$ and
$I$, we will care about statements holding with high probability under the
distribution of $(w_n)$ and $I$, which are described in Assumptions \ref{def:reasonableodds} and \ref{def:ones}. 
Fortunately, we will see in Proposition \ref{prop:transitionfavorable} that 
favorable states can be reached from unfavorable ones with probability at least 
order $N^{-1}$, while unfavorable states are visited from favorable ones with
probability less than order $N^{-1}$. 
This will prove enough for us to establish a mixing time of order $N\log N$ 
in Theorem \ref{thm:convergencerate}.  

\begin{assumption}\label{def:reasonableodds}
  (Condition on the odds).  The odds $(w_n)$ are such that
  there exist $\zeta>0$, $0<l<r<\infty$ and $\eta>0$ such that
  for all $N$ large enough,
  \[\mathbb{P}\left(\left|\left\{n\in[N]: w_n \notin
  (l, r)\right\}\right|\leq \zeta N\right)\geq
1-\exp(-\eta N).\]
\end{assumption}
This assumption states that with exponentially high probability, a proportion of the odds that falls
within an interval can be defined independently of $N$. The condition can be verified 
using for example Hoeffding's inequality if the odds $(w_n)$ are independently
and identically distributed on $(0,\infty)$, but also under weaker conditions.
The statement ``for all $N$ large enough''
means for all $N\geq N_0$ where $N_0\in\mathbb{N}$.

%

\begin{assumption}\label{def:ones}
  (Conditions on $I$).
  There exist $0<\xi\leq 1/2$ and $\eta' >0$ such that for all $N$ large enough,
  \[\mathbb{P}\left(\xi N \le I \right)\geq 1 - \exp(- \eta' N).\]
\end{assumption}

This assumption formalizes what we mean by $I \propto N$, and is
probabilistic rather than setting $I=\lfloor \xi N\rfloor$ for some $\xi\in
(0,1/2]$. 
It implies that $\xi N^2/2 \le (N-I)I \le (1 - \xi) N^2/2$ with high probability.
Recall that we have assumed $I\leq N/2$ without loss of generality.

\begin{proposition} \label{prop:transitionfavorable}
  Suppose Assumptions \ref{def:reasonableodds} and \ref{def:ones} hold such that $\zeta < \xi$. 
  Then we can define $\xifavtomeet,\xiunfavtofav,\xifavtounfav,\nu>0$ 
  and $0<w_{lo}<w_{hi}<\infty$
  such that, for all $N$ large enough, with probability at least $1-\exp(-\nu N)$,
  the sets of favorable and unfavorable states defined as
\begin{align}
  \unfavset &= \{ (x,\tilde{x})\in \adjset: w_{a}<w_{lo} \text{ and } w_{b}>w_{hi}\},
  \label{eq:unfavorablestates}\\
  \favset &= \{(x,\tilde{x})\in \adjset: w_{a} \ge w_{lo} \ \text{ or } \  w_{b} \le w_{hi}\},
  \label{eq:favorablestates}
\end{align}
and the ``diagonal'' set $ \bar{\mathbb{X}}_{\mathrm{D}} = \{(x,\tilde{x})\in \mathbb{X}^2: x = \tilde{x}\}$, 
satisfy the following statements under the coupling $\bar{P}$ described in Section \ref{sec:contraction},
\begin{align}
  \bar{P}((x,\tilde{x}), \bar{\mathbb{X}}_{\mathrm{D}}) &\geq \xifavtomeet /N,\quad \forall (x,\tilde{x}) \in \favset,
  \label{eq:transit:favtomeet}\\
 \bar{P}((x,\tilde{x}), \favset) &\geq \xiunfavtofav/N, \quad\forall (x,\tilde{x}) \in \unfavset,
 \label{eq:transit:unfavtofav}\\
 \bar{P}((x,\tilde{x}), \unfavset) &\leq \xifavtounfav/N,\quad  \forall (x,\tilde{x}) \in \favset.\label{eq:transit:favtounfav}
\end{align}
\end{proposition}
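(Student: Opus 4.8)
The plan is to work on the high-probability event on which both Assumptions \ref{def:reasonableodds} and \ref{def:ones} hold simultaneously; since the two failure events have exponentially small probability $\exp(-\eta N)$ and $\exp(-\eta' N)$, their union is at most $\exp(-\nu N)$ for a suitable $\nu>0$ and $N$ large, which will yield the claimed probability bound. On this event we have at least $(1-\zeta)N$ odds lying in the fixed interval $(l,r)$ and $I\geq \xi N$ (hence also $N-I\geq N/2$), so $(N-I)I$ is between $\xi N^2/2$ and $N^2/2$. The first task is to choose the thresholds $w_{lo}<w_{hi}$: take $w_{lo}\le l$ and $w_{hi}\ge r$ so that every index $n$ with $w_n\in(l,r)$ automatically satisfies $w_n\ge w_{lo}$ \emph{and} $w_n\le w_{hi}$. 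With this choice, ``unfavorable'' forces both $w_a<w_{lo}\le l$ and $w_b>w_{hi}\ge r$, i.e. both $a$ and $b$ are among the at most $\zeta N$ exceptional indices.

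Next I would prove each of the three transition bounds using formula \eqref{eq:contractionrate} together with the explicit description of the coupling $\bar P$ from Section \ref{sec:contraction}. For \eqref{eq:transit:favtomeet}: from a favorable adjacent pair, $\bar P((x,\tilde x),\diagset)=c(x,\tilde x)$, and at least one of $w_a\ge w_{lo}$, $w_b\le w_{hi}$ holds. In the first case lower-bound the sum $\sum_{i_1\in S_1\cap\tilde S_1}\min(1,w_a/w_{i_1})$ by restricting to the $i_1$ with $w_{i_1}\in(l,r)$ — there are at least $(\xi-\zeta)N$ of them by a counting argument combining $|S_1\cap\tilde S_1|=I-1\ge \xi N-1$ with the $\le\zeta N$ exceptional indices — and each such term is at least $\min(1,w_{lo}/r)$, a constant; dividing by $(N-I)I\le N^2/2$ gives a bound of order $N^{-1}$. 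The case $w_b\le w_{hi}$ is symmetric using the $\sum_{i_0\in S_0\cap\tilde S_0}$ term and $|S_0\cap\tilde S_0|=N-I-1$. For \eqref{eq:transit:unfavtofav}: starting from an unfavorable pair, it suffices to exhibit one coupled move of probability $\ge$ const$/N$ that lands in $\favset\cup\diagset$ — e.g. propose $i_0=\tilde i_0$ equal to a common index with $w_{i_0}\in(l,r)$ and $i_1=\tilde i_1$ equal to the common index $b$ (available since $b\in S_1\cap\tilde S_1$ when $d=2$); accepting this with the shared uniform changes $w_a$ to $w_{i_0}\ge w_{lo}$ while keeping the pair adjacent, hence in $\favset$, and the probability of selecting these indices is $\ge (\xi-\zeta)N/\{(N-I)I\}\cdot\text{(acceptance)}$, again order $N^{-1}$. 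For \eqref{eq:transit:favtounfav}: from a favorable pair, to \emph{become} unfavorable the pair must first change either $a$ or $b$ (it cannot change both in one step) into an exceptional index; this requires selecting a specific index from $S_0\cap\tilde S_0$ or from $S_1\cap\tilde S_1$ as the ``diagonal'' draw \emph{and} having the non-diagonal draw land on $a$ or $b$ respectively, an event of probability $O(N^{-1})\cdot O(N^{-1})$... actually one factor of $1/I$ or $1/(N-I)$ times a single admissible choice, so $\le$ const$/N$; I would bound the number of ways to reach $\unfavset$ in one step crudely and multiply by the per-move probability $\le 1/\{(N-I)I\}\cdot(\text{count})$.

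The main obstacle I anticipate is \eqref{eq:transit:favtounfav}: unlike the other two bounds, which only need the \emph{existence} of one good move, here I need an \emph{upper} bound that holds uniformly over \emph{all} favorable starting pairs, including those that are ``barely'' favorable (e.g. $w_a$ just above $w_{lo}$ with $w_b$ huge), so I must carefully enumerate every coupled transition of $\bar P$ that produces a pair in $\unfavset$ and check that each such transition either requires a specific low-probability index selection or is outright impossible. This is where the freedom in choosing the constants matters — I expect to need $w_{lo}$ strictly smaller than $l$ and $w_{hi}$ strictly larger than $r$ with a genuine gap, and possibly to enlarge $N_0$, so that moving from favorable to unfavorable always entails picking one of a bounded number of designated indices as the diagonal draw. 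Once the per-step counts are controlled, dividing by $(N-I)I\ge \xi N^2/2$ delivers the $O(N^{-1})$ upper bound, and collecting the three constants $\xifavtomeet,\xiunfavtofav,\xifavtounfav$ together with $w_{lo},w_{hi},\nu$ completes the proof.
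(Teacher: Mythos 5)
Your overall strategy matches the paper's: intersect the two high-probability events, choose thresholds with $w_{lo}<l$ and $w_{hi}>r$ separated by a genuine gap, read \eqref{eq:transit:favtomeet} off the contraction formula \eqref{eq:contractionrate} by counting the at least $(\xi-\zeta)N-1$ (resp.\ $(1/2-\zeta)N-1$) non-exceptional indices in $S_1\cap\tilde{S}_1$ (resp.\ $S_0\cap\tilde{S}_0$), and prove \eqref{eq:transit:favtounfav} by observing that reaching $\unfavset$ forces the diagonal draw to land on one of the at most $\zeta N$ exceptional indices, giving an upper bound of order $\zeta N/\{(N-I)I\}$. Those two bounds are essentially the paper's argument, and your anticipation that \eqref{eq:transit:favtounfav} is the bound requiring a full enumeration of routes is correct.

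There is, however, a concrete error in your proof of \eqref{eq:transit:unfavtofav}. The move you exhibit --- ``$i_1=\tilde i_1$ equal to the common index $b$, available since $b\in S_1\cap\tilde S_1$'' --- is impossible: for an adjacent pair, $b$ is by definition the index with $x_b=1$ and $\tilde x_b=0$, so $b\in S_1\setminus\tilde S_1$, and under the maximal coupling the event $\{i_1=b\}$ forces $\tilde i_1=a$ (the unique element of $\tilde S_1\setminus S_1$), never $\tilde i_1=b$. The event you assign positive probability to has probability zero. The move can be repaired: take $i_0=\tilde i_0=j\in S_0\cap\tilde S_0$ with $w_j\in(l,r)$ and $(i_1,\tilde i_1)=(b,a)$, which is exactly the second case listed in the paper's proof (the paper analyzes the symmetric first case, $(i_0,\tilde i_0)=(a,b)$ with $i_1=\tilde i_1$). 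But then you must split on the shared uniform: if \emph{both} swaps are accepted the chains meet and land in $\diagset$, which does not count toward $\bar{P}((x,\tilde x),\favset)$ as stated in \eqref{eq:transit:unfavtofav}; only the ``exactly one swap accepted'' branch produces an adjacent pair with the new $a$ (or $b$) replaced by $j$, hence in $\favset$. That branch has probability $1-\min(1,w_j/w_b)\geq 1-r/w_{hi}$, which is bounded below by a positive constant only because of the strict gap $w_{hi}>r$ --- so the gap you flag as needed for \eqref{eq:transit:favtounfav} is in fact needed here (the paper introduces $\delta$ with $w_{lo}+\delta=l$, $w_{hi}-\delta=r$ for precisely this purpose), whereas \eqref{eq:transit:favtounfav} goes through with the crude bound that each exceptional-index transition has probability at most $1/\{(N-I)I\}$. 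With these repairs your argument coincides with the paper's.
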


The proof in Appendix \ref{appx:proofs:transition} relies on a careful inspection of the various cases arising in the
propagation of the coupled chains. 
The proposition provides bounds on the transition 
probabilities between the subsets $\unfavset$, $\favset$ and $\diagset$.

\subsection{Chasing chain and mixing time \label{sec:mixingtime}}

We relate the coupled chain $(x^{(t)},\tilde{x}^{(t)})$ 
to an auxiliary Markov chain denoted by $(Z^{(t)})$,
defined on a space with three states $\{1,2,3\}$, 
associated with the subsets $\unfavset$, $\favset$ and $\diagset$, respectively. 
We introduce the Markov transition matrix 
\begin{align}
  Q = \begin{pmatrix}
    1-\xiunfavtofav/N & \xiunfavtofav/N & 0\\
    \xifavtounfav/N & 1-\xifavtounfav/N-\xifavtomeet/N & \xifavtomeet/N\\
    0 & 0 & 1
  \end{pmatrix},
  \label{eq:transition3states}
\end{align}
where the constants $\xifavtomeet,\xiunfavtofav,\xifavtounfav>0$ are given by Proposition \ref{prop:transitionfavorable}, 
and we assume $N$ is large enough for each entry, including 
$1-\xiunfavtofav/N$ and $ 1-\xifavtounfav/N-\xifavtomeet/N$,  
to be positive.
We then observe that a Markov chain $(Z^{(t)})$ with transition $Q$ 
is such that there exists $r\in(0,1)$ independent of $N$ satisfying $\mathbb{P}(Z^{(N)} = 3 | Z^{(0)} = 1)\geq 1-r$;
details can be found in Appendix \ref{appx:convergenceQ}.

\begin{figure}[t]
  \centering
  \includegraphics[width=0.4\textwidth]{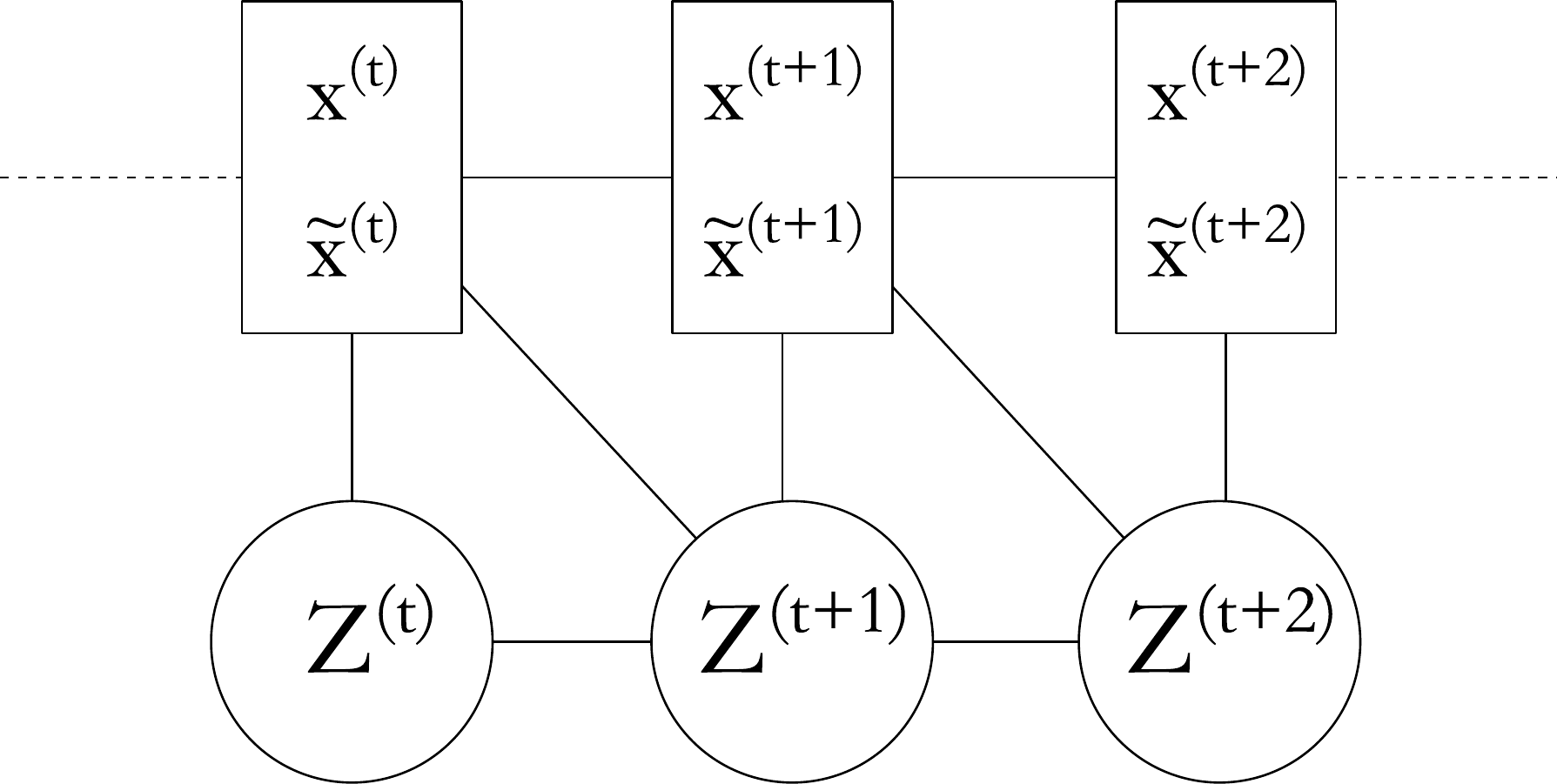}
  \caption{Conditional dependencies of the processes $(x^{(t)},\tilde{x}^{(t)})$ and $(Z^{(t)})$.}
  \label{fig:Zdependencies}
\end{figure}

We now relate the auxiliary chain to $(x^{(t)},\tilde{x}^{(t)})$ using a strategy 
inspired by \citet{jacob2014wang}.
Consider the variable $B^{(t)} \in \{1,2,3\}$ 
defined as $1$ if $(x^{(t)},\tilde{x}^{(t)}) \in \unfavset$,
$2$ if $(x^{(t)},\tilde{x}^{(t)}) \in \favset$ and $3$ if $x^{(t)} = \tilde{x}^{(t)}$.
The key idea is to construct the auxiliary chain $(Z^{(t)})$ on $\{1,2,3\}$, 
in such a way that it is (marginally) a Markov chain with transition matrix
$Q$ in \eqref{eq:transition3states},
and also such that $Z^{(t)}\leq B^{(t)}$ for all $t$ almost surely;
this is possible thanks to Proposition \ref{prop:transitionfavorable}.
Thus the event $\{Z^{(t)} = 3\}$ will imply $\{B^{(t)} = 3\} = \{x^{(t)}=\tilde{x}^{(t)}\}$,
and we can translate the hitting time of $(Z^{(t)})$ 
to its absorbing state
into 
a statement about the meeting time of $(x^{(t)},\tilde{x}^{(t)})$.
An explicit construction of $(Z^{(t)})$ is described in Appendix \ref{appx:constructionauxprocess};
Figure \ref{fig:Zdependencies} represents the dependency structure 
where $Z^{(t+1)}$ is constructed given $Z^{(t)}$, but also conditional
upon $(x^{(t)},\tilde{x}^{(t)})$ and $(x^{(t+1)},\tilde{x}^{(t+1)})$
to ensure that the inequality $Z^{(t+1)}\leq B^{(t+1)}$ holds almost surely.

The convergence of $(Z^{(t)})$ to its
absorbing state translates into
an upper bound on the mixing time of $(x^{(t)})$ of the order of $N\log N$ iterations,
which is our main result.

\begin{theorem} \label{thm:convergencerate}
  Under Assumptions \ref{def:reasonableodds} and \ref{def:ones} such that $\zeta < \xi$, 
  there exist $\kappa>0$, $\nu>0$, $N_0\in\mathbb{N}$ independent of $N$ such that,
  for any $\epsilon\in(0,1)$, and for all $N\geq N_0$,
  with probability at least $1-\exp(-\nu N)$,
  we have 
\[\|x^{(t)} - \CB(p,I)\|_{\mathrm{TV}}\leq \epsilon \quad \text{ for all }\quad t\geq \kappa N \log (N/\epsilon).\]
\end{theorem}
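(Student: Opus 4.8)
The plan is to chain together three facts already established in the excerpt: the path-coupling reduction of mixing time to meeting time of the coupled chain $(x^{(t)},\tilde x^{(t)})$ under $\bar P$; the domination $Z^{(t)} \le B^{(t)}$ of the $\{1,2,3\}$-valued indicator process by the auxiliary Markov chain with transition $Q$; and the geometric decay of $\mathbb P(Z^{(N)} \neq 3 \mid Z^{(0)} = 1)$ at a rate $r\in(0,1)$ independent of $N$. First I would fix $N_0$ large enough that all the conclusions of Proposition \ref{prop:transitionfavorable} hold (this is where the ``with probability at least $1-\exp(-\nu N)$'' clause and the condition $\zeta < \xi$ enter), that every entry of $Q$ in \eqref{eq:transition3states} is a valid probability, and that the bound $\mathbb P(Z^{(N)}=3\mid Z^{(0)}=1)\ge 1-r$ from Appendix \ref{appx:convergenceQ} is in force. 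I would then condition on the event of probability $\ge 1-\exp(-\nu N)$ on which $\favset$, $\unfavset$ satisfy \eqref{eq:transit:favtomeet}--\eqref{eq:transit:favtounfav}; everything below is deterministic given the odds $(w_n)$ and $I$, so the final probabilistic statement is just this conditioning carried through.

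Next I would set up the coupled chain: start $\tilde x^{(0)} \sim \CB(p,I)$ (stationary) and $x^{(0)} \sim \pi_0$, run them jointly under $\bar P$, and define $B^{(t)}$ as in the text. Using the construction of Appendix \ref{appx:constructionauxprocess} I obtain $(Z^{(t)})$ with $Z^{(t)} \le B^{(t)}$ almost surely and $Z^{(0)} = 1$ (any starting configuration has $B^{(0)} \ge 1$, so taking $Z^{(0)}=1$ respects the domination). Since $\{Z^{(t)}=3\} \subseteq \{B^{(t)}=3\} = \{x^{(t)}=\tilde x^{(t)}\}$ and state $3$ is absorbing for $Q$, I get $\mathbb P(x^{(t)} \neq \tilde x^{(t)}) \le \mathbb P(Z^{(t)} \neq 3) \le r^{\lfloor t/N\rfloor}$ for all $t$, by applying the one-step-of-length-$N$ bound $\lfloor t/N\rfloor$ times via the Markov property (state $3$ being absorbing, once $Z$ reaches it it stays). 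By the coupling inequality, $\|x^{(t)} - \CB(p,I)\|_{\mathrm{TV}} \le \mathbb P(x^{(t)}\neq\tilde x^{(t)}) \le r^{\lfloor t/N\rfloor}$.

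Finally I would solve for $t$: requiring $r^{\lfloor t/N\rfloor} \le \epsilon$ it suffices that $\lfloor t/N\rfloor \ge \log(1/\epsilon)/\log(1/r)$, hence that $t \ge N\bigl(1 + \log(1/\epsilon)/\log(1/r)\bigr)$. Setting $\kappa = 2/\log(1/r)$ (or any constant absorbing the $+1$ and the $\log$-base conversion, using $\log(N/\epsilon) \ge \log 2$ for $N\ge 2$ and $\epsilon<1$), one checks $t \ge \kappa N\log(N/\epsilon)$ implies $r^{\lfloor t/N\rfloor}\le\epsilon$, giving the claim with this $\kappa$, the $\nu$ from Proposition \ref{prop:transitionfavorable}, and $N_0$ as fixed above. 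The only subtlety worth spelling out is that $r$ does not depend on $N$ — this is exactly the content of the Appendix \ref{appx:convergenceQ} computation, which exploits that the ratio $\xifavtounfav/\xifavtomeet$ controlling the quasi-stationary escape from state $2$ is $N$-independent, so that although each entry of $Q$ is $O(1/N)$, running for $N$ steps produces an $O(1)$ probability of absorption.

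\textbf{Main obstacle.} The genuinely delicate part is not this assembly but the input Proposition \ref{prop:transitionfavorable} (and the $Q$-chain analysis): one must verify that the three transition bounds hold simultaneously with the \emph{same} thresholds $w_{lo}, w_{hi}$ and that state $2$ (favorable) is not a trap — i.e. the chain does not spend $\omega(N)$ steps oscillating between $\favset$ and $\unfavset$ before hitting the diagonal. Granting that proposition, the remaining argument is the short, essentially bookkeeping, computation sketched above; the one place to be careful is ensuring the geometric rate $r$ and the constant $\kappa$ are extracted without hidden $N$-dependence, and that the low-probability ``bad'' event for the odds is handled once, at the outset, rather than accumulated over $t$.
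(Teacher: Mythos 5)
There is a genuine gap in the middle of your argument. You run the coupled chain from $x^{(0)}\sim\pi_0$ and $\tilde x^{(0)}\sim\CB(p,I)$ and immediately apply the domination $Z^{(t)}\le B^{(t)}$, asserting that ``any starting configuration has $B^{(0)}\ge 1$''. But $B^{(t)}$ is only defined on $\adjset\cup\diagset$: the sets $\unfavset$ and $\favset$ in \eqref{eq:unfavorablestates}--\eqref{eq:favorablestates} partition the set of \emph{adjacent} pairs (Hamming distance exactly $2$), not all of $\mathbb{X}^2$. A typical draw of $(x^{(0)},\tilde x^{(0)})$ has distance of order $N$, so $B^{(0)}$ does not exist, and more importantly Proposition \ref{prop:transitionfavorable} gives no transition bounds from non-adjacent pairs --- indeed from a pair at distance $D>4$ the kernel $\bar P$ cannot reach $\diagset$ in one step, so nothing like \eqref{eq:transit:favtomeet} can hold there. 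Consequently the construction of $(Z^{(t)})$ with $Z^{(t)}\le B^{(t)}$ and transition $Q$ is unavailable for your coupled chain, and the bound $\mathbb{P}(x^{(t)}\neq\tilde x^{(t)})\le r^{\lfloor t/N\rfloor}$ does not follow. A symptom that something is off is that your final bound is $t\gtrsim N\log(1/\epsilon)$ with no $\log N$, i.e.\ formally stronger than the theorem; the $\log N$ in the statement is not decorative.

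The missing ingredient is precisely the path coupling step, which you name in your opening sentence but never actually invoke (what you do use is the coupling inequality, which is a different reduction). The paper first converts the $Z$-chain absorption bound into an $N$-step contraction of the \emph{expected Hamming distance} from adjacent pairs, $\mathbb{E}[d(x^{(N)},\tilde x^{(N)})\mid x,\tilde x]\le r\,d(x,\tilde x)$ for $(x,\tilde x)\in\adjset$ (Proposition \ref{prop:contract}); path coupling (Appendix \ref{sec:pathcoupling}) then extends this contraction to arbitrary pairs for the kernel $P^N$, and the chain of inequalities $\|x^{(tN)}-\CB(p,I)\|_{\mathrm{TV}}\le\mathbb{E}[d^{(tN)}]\le r^t\,\mathbb{E}[d^{(0)}]\le r^t N$ yields the $\kappa N\log(N/\epsilon)$ bound after a change of time variable. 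Your assembly of constants, the handling of the high-probability event for the odds at the outset, and the observation that $r$ is $N$-independent are all fine; the fix is to insert the adjacent-pair contraction plus path coupling between the $Z$-chain analysis and the coupling inequality.
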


The proof of Theorem \ref{thm:convergencerate} is given in Appendix \ref{appx:upperboundmixing}.

\section{Discussion}

Using the strategy of \citet{biswas2019estimating}, we assess the convergence
rate of the chain in the regime where $I$ is sub-linear in $N$.  Figure
\ref{fig:mixingtimeNlogN_smallI} shows the estimated upper bounds on the mixing
time obtained in the case where $I$ is fixed to $10$ while $N$ grows, and where
$(p_n)$ are independent Uniform(0,1) (generated once for each value of $N$). The figure might suggest that the
mixing time grows at a slower rate than $N$ in this setting,
and thus that MCMC is competitive relative to exact sampling.
Understanding the small $I$ regime remains an open problem. 

\begin{figure}[t]
\begin{centering}
    \subfloat[ Meeting times]{\begin{centering}
      \includegraphics[width=0.45\textwidth]{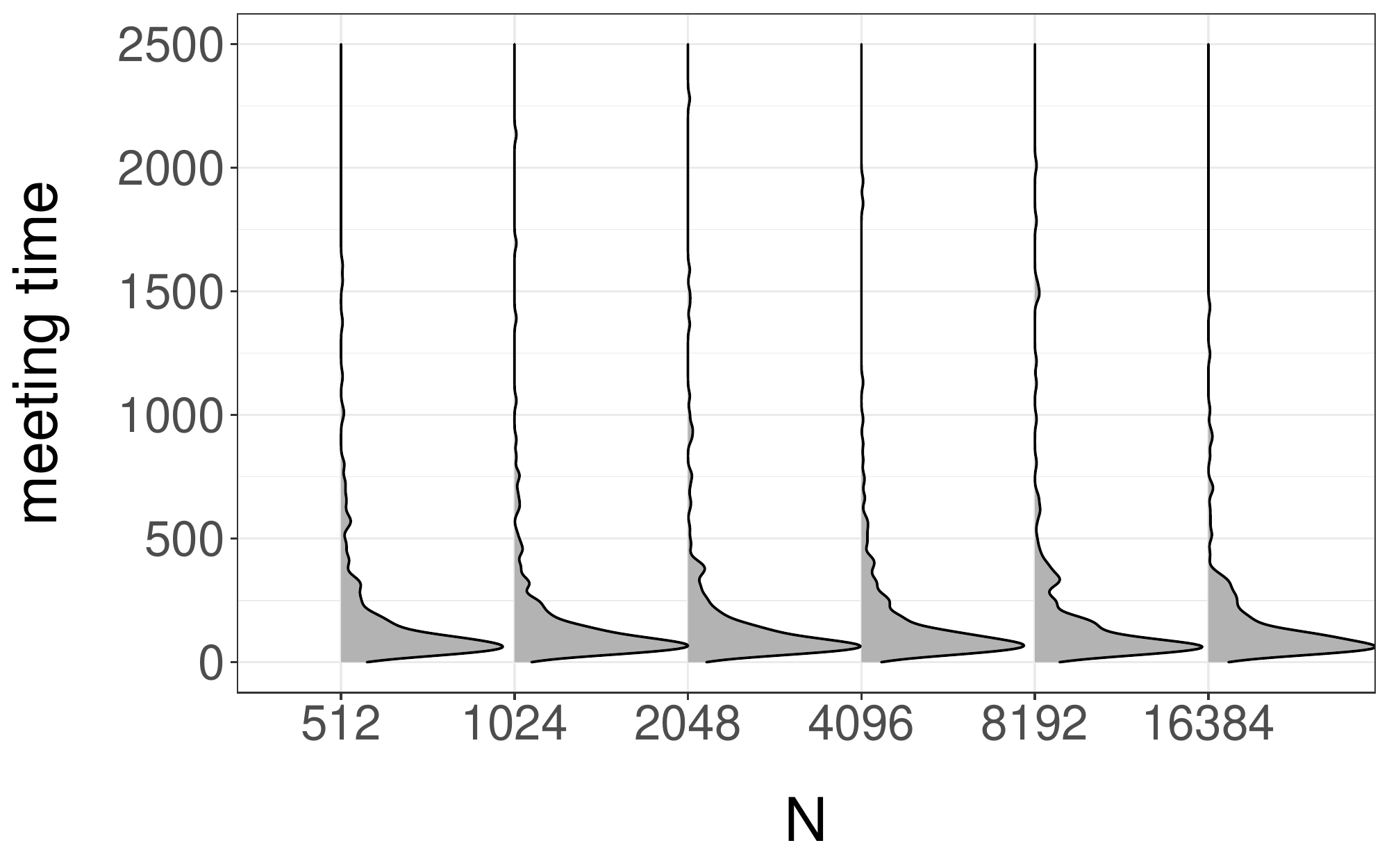}
\par\end{centering}
}
\hspace*{1cm}
\subfloat[Estimated $1\%$-mixing times divided by $N$]{\begin{centering}
    \includegraphics[width=0.45\textwidth]{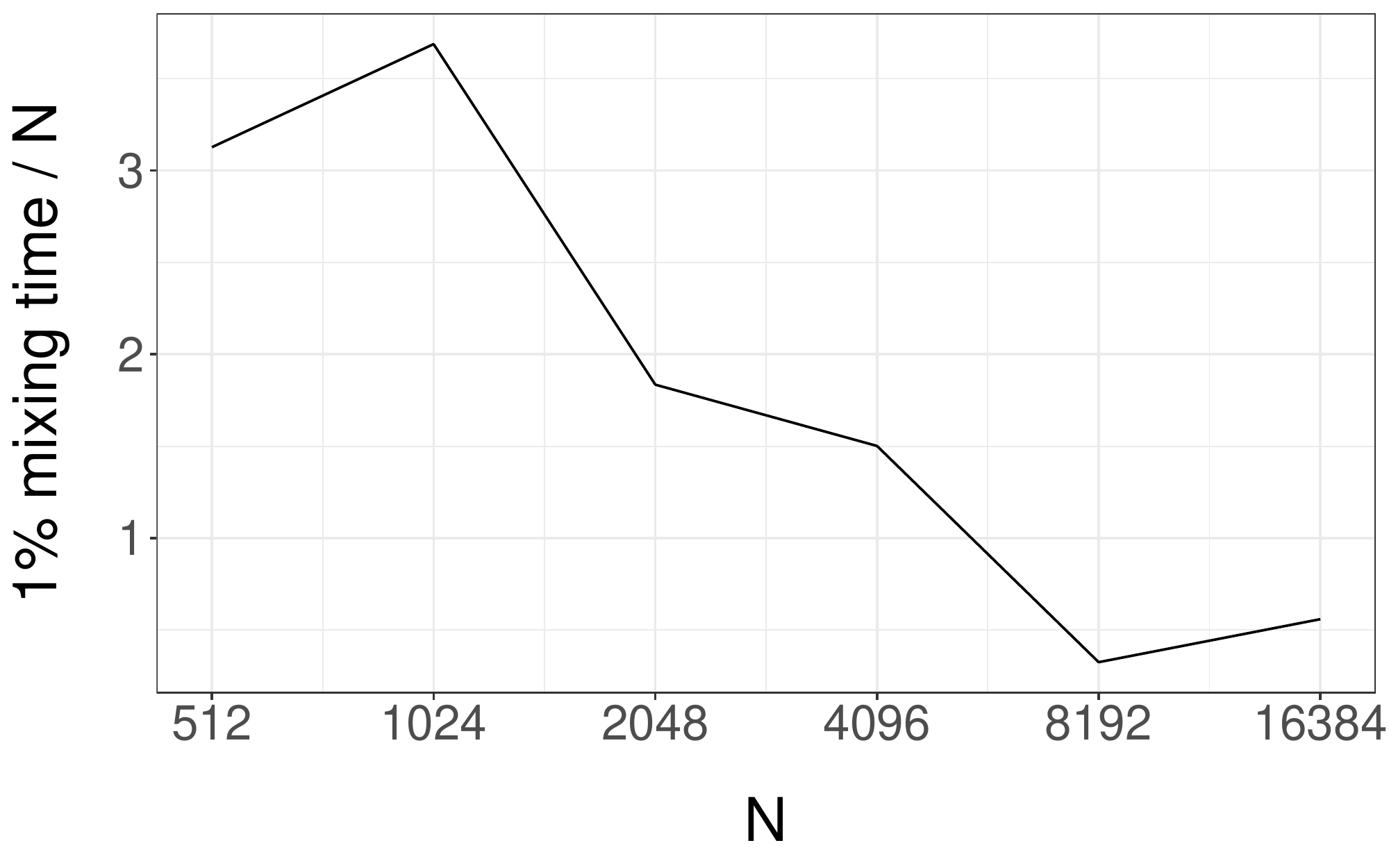}
\par\end{centering}
}
\par\end{centering}
\caption{ Meeting times (\emph{left}) and estimated upper bounds on the mixing time,
  divided by $N$ (\emph{right}), against $N$. The probabilities $p$ are
  independent Uniform(0,1) and $I=10$ for all $N$.
  \label{fig:mixingtimeNlogN_smallI} }
\end{figure}


Our approach relies on a partition of the state space and an auxiliary Markov chain
defined on the subsets given by the partition. 
This technique bear a resemblance to partitioning the state space
with more common drift and contraction conditions 
\citep{durmus2015quantitative,qin2019geometric}, 
but appears to be distinct.

The present setting is similar to the question of sampling permutations via
random swaps. For that problem, direct applications of the coupling argument
result in upper bounds on the mixing time of the order of at least $N^2$.
\citet{bormashenko2011coupling} devises an original variant of the path
coupling strategy to obtain an upper bound in $N \log N$, which is the correct
dependency on $N$; see \citet{berestycki2019cutoff} for recent developments
leading to sharp constants.

The proposed analysis captures the impact of the dimension $N$ faithfully. 
It fails to provide accurate constants and exact characterizations of how the mixing time depends on 
the distribution of the probabilities $(p_n)$ and the sum $I$. 
Yet our analysis already 
supports the use of MCMC over exact sampling strategies
for conditional Bernoulli sampling,
especially as part of encompassing MCMC algorithms
such as that of \citet{yang2016computational} for 
Bayesian variable selection.

%

\subsubsection*{Acknowledgments}
This work was funded by CY Initiative of Excellence (grant ``Investissements
d'Avenir'' ANR-16-IDEX-0008). 
Pierre E. Jacob gratefully acknowledges support
by the National Science Foundation through grants DMS-1712872 and DMS-1844695.

\small 
\bibliographystyle{plainnat}
\bibliography{ref}

\begin{thebibliography}{18}
\providecommand{\natexlab}[1]{#1}
\providecommand{\url}[1]{\texttt{#1}}
\expandafter\ifx\csname urlstyle\endcsname\relax
  \providecommand{\doi}[1]{doi: #1}\else
  \providecommand{\doi}{doi: \begingroup \urlstyle{rm}\Url}\fi

\bibitem[Berestycki and {\c{S}}eng{\"u}l(2019)]{berestycki2019cutoff}
Nathana{\"e}l Berestycki and Bat{\i} {\c{S}}eng{\"u}l.
\newblock Cutoff for conjugacy-invariant random walks on the permutation group.
\newblock \emph{Probability Theory and Related Fields}, 173\penalty0
  (3-4):\penalty0 1197--1241, 2019.

\bibitem[Biswas et~al.(2019)Biswas, Jacob, and Vanetti]{biswas2019estimating}
Niloy Biswas, Pierre~E Jacob, and Paul Vanetti.
\newblock Estimating convergence of {M}arkov chains with {L}-lag couplings.
\newblock In \emph{Advances in Neural Information Processing Systems}, pages
  7391--7401, 2019.

\bibitem[Bormashenko(2011)]{bormashenko2011coupling}
Olena Bormashenko.
\newblock A coupling argument for the random transposition walk.
\newblock \emph{arXiv preprint arXiv:1109.3915}, 2011.

\bibitem[Brostr{\"o}m and Nilsson(2000)]{brostrom2000acceptance}
G{\"o}ran Brostr{\"o}m and Leif Nilsson.
\newblock Acceptance--rejection sampling from the conditional distribution of
  independent discrete random variables, given their sum.
\newblock \emph{Statistics: A Journal of Theoretical and Applied Statistics},
  34\penalty0 (3):\penalty0 247--257, 2000.

\bibitem[Bubley and Dyer(1997)]{bubley1997path}
Russ Bubley and Martin Dyer.
\newblock Path coupling: A technique for proving rapid mixing in {M}arkov
  chains.
\newblock In \emph{Proceedings 38th Annual Symposium on Foundations of Computer
  Science}, pages 223--231. IEEE, 1997.

\bibitem[Chen and Liu(1997)]{chen1997statistical}
Sean~X Chen and Jun~S Liu.
\newblock Statistical applications of the {Poisson}-{Binomial} and conditional
  {B}ernoulli distributions.
\newblock \emph{Statistica Sinica}, pages 875--892, 1997.

\bibitem[Chen et~al.(1994)Chen, Dempster, and Liu]{chen1994weighted}
Xiang-Hui Chen, Arthur~P Dempster, and Jun~S Liu.
\newblock Weighted finite population sampling to maximize entropy.
\newblock \emph{Biometrika}, 81\penalty0 (3):\penalty0 457--469, 1994.

\bibitem[Diaconis and Shahshahani(1987)]{diaconis1987time}
Persi Diaconis and Mehrdad Shahshahani.
\newblock Time to reach stationarity in the {B}ernoulli--{L}aplace diffusion
  model.
\newblock \emph{SIAM Journal on Mathematical Analysis}, 18\penalty0
  (1):\penalty0 208--218, 1987.

\bibitem[Donnelly et~al.(1994)Donnelly, Lloyd, and
  Sudbury]{donnelly1994approach}
Peter Donnelly, Peter Lloyd, and Aidan Sudbury.
\newblock Approach to stationarity of the {B}ernoulli--{L}aplace diffusion
  model.
\newblock \emph{Advances in Applied Probability}, 26\penalty0 (3):\penalty0
  715--727, 1994.

\bibitem[Durmus and Moulines(2015)]{durmus2015quantitative}
Alain Durmus and {\'E}ric Moulines.
\newblock Quantitative bounds of convergence for geometrically ergodic {M}arkov
  chain in the {W}asserstein distance with application to the {M}etropolis
  adjusted {L}angevin algorithm.
\newblock \emph{Statistics and Computing}, 25\penalty0 (1):\penalty0 5--19,
  2015.

\bibitem[Eskenazis and Nestoridi(2020)]{eskenazis2020}
Alexandros Eskenazis and Evita Nestoridi.
\newblock Cutoff for the {B}ernoulli–{L}aplace urn model with $o(n)$ swaps.
\newblock \emph{Ann. Inst. H. Poincaré Probab. Statist.}, 56\penalty0
  (4):\penalty0 2621--2639, 11 2020.
\newblock \doi{10.1214/20-AIHP1052}.
\newblock URL \url{https://doi.org/10.1214/20-AIHP1052}.

\bibitem[Guruswami(2000)]{guruswami2000rapidly}
Venkatesan Guruswami.
\newblock Rapidly mixing {M}arkov chains: A comparison of techniques.
\newblock \emph{Available: cs. washington. edu/homes/venkat/pubs/papers. html},
  2000.

\bibitem[Hough et~al.(2006)Hough, Krishnapur, Peres, and
  Vir{\'a}g]{hough2006determinantal}
J~Ben Hough, Manjunath Krishnapur, Yuval Peres, and B{\'a}lint Vir{\'a}g.
\newblock Determinantal processes and independence.
\newblock \emph{Probability surveys}, 3:\penalty0 206--229, 2006.

\bibitem[Jacob and Ryder(2014)]{jacob2014wang}
Pierre~E Jacob and Robin~J Ryder.
\newblock The {W}ang--{L}andau algorithm reaches the flat histogram criterion
  in finite time.
\newblock \emph{The Annals of Applied Probability}, 24\penalty0 (1):\penalty0
  34--53, 2014.

\bibitem[Kulesza and Taskar(2012)]{kulesza2012determinantal}
Alex Kulesza and Ben Taskar.
\newblock Determinantal point processes for machine learning.
\newblock \emph{Foundations and Trends in Machine Learning}, 5\penalty0
  (2--3):\penalty0 123--286, 2012.

\bibitem[Liu et~al.(1995)Liu, Neuwald, and Lawrence]{liu1995bayesian}
Jun~S Liu, Andrew~F Neuwald, and Charles~E Lawrence.
\newblock Bayesian models for multiple local sequence alignment and {G}ibbs
  sampling strategies.
\newblock \emph{Journal of the American Statistical Association}, 90\penalty0
  (432):\penalty0 1156--1170, 1995.

\bibitem[Qin and Hobert(2019)]{qin2019geometric}
Qian Qin and James~P Hobert.
\newblock Geometric convergence bounds for {M}arkov chains in {W}asserstein
  distance based on generalized drift and contraction conditions.
\newblock \emph{arXiv preprint arXiv:1902.02964}, 2019.

\bibitem[Yang et~al.(2016)Yang, Wainwright, and Jordan]{yang2016computational}
Yun Yang, Martin~J Wainwright, and Michael~I Jordan.
\newblock On the computational complexity of high-dimensional {B}ayesian
  variable selection.
\newblock \emph{The Annals of Statistics}, 44\penalty0 (6):\penalty0
  2497--2532, 2016.

\end{thebibliography}

\appendix

\section{Exact sampling of conditional Bernoulli \label{appx:exactsampling}}

We describe a procedure to sample exactly from $\CB(p,I)$ for a cost of order $N^2$. First, compute a $(I+1)\times N$ matrix of entries $q(i,n)$,
for $i\in\{0,\ldots,I\},n\in[N]$,
where $q(i,n) = \mathbb{P}(\sum_{m=n}^N x_m = i)$ with each $x_n$ independent Bernoulli$(p_n)$.
To compute these entries, proceed as follows.
The initial conditions are given by  
\begin{equation}
	q(0,n) = \mathbb{P}\left(\sum_{m=n}^N x_m=0\right) = \prod_{m=n}^N\mathbb{P}(x_m=0) = \prod_{m=n}^N (1 - p_m), \quad n\in[N],
\end{equation}
in the case of no success, $q(1,N)=\mathbb{P}(x_N=1)=p_N$
where the sum reduces to a single Bernoulli variable, and 
$q(i,n)=0$ for $i>N-n+1$ because a sum of $N-n+1$ Bernoulli variables
cannot be larger than $N-n+1$, in particular $q(i,N)=0$ for all $i\geq 2$. 
The other entries $q(i,n)$ can be obtained recursively,
via
\[q(i,n) = p_n q(i-1,n+1) + (1 - p_n) q(i,n+1), \quad i\in[N],n\in[N-1].\]
%
Indeed, for $i\in[N]$ and $n\in[N-1]$, by conditioning on the value of
$x_{n}\in\{0,1\}$, the law of total probability gives 
\begin{align}
	q(i,n) &= \mathbb{P}(x_{n}=1)~\mathbb{P}\left(\sum_{m=n}^Nx_m=i \mid x_{n}=1\right) ~+~ 
	\mathbb{P}(x_{n}=0)~\mathbb{P}\left(\sum_{m=n}^Nx_m=i \mid x_{n}=0\right) \notag\\
	&= p_n~\mathbb{P}\left(\sum_{m=n+1}^Nx_m=i-1\right) ~+~ 
	(1-p_n)~\mathbb{P}\left(\sum_{m=n+1}^Nx_m=i \right) .
\end{align}

Having obtained the $(I+1)\times N$ entries $q(i,n)$, 
we now derive a sequential decomposition of a conditioned Bernoulli distribution 
that enables sampling in the order of $N$ operations.
To sample $x_1$, we compute $\mathbb{P}(x_1 = 1 | \sum_{n=1}^N x_n = i)$, as
\begin{align}
  \mathbb{P}\left(x_1 = 1 | \sum_{n=1}^N x_n = i\right) 
  &= \frac{\mathbb{P}\left(x_1 = 1\right)\mathbb{P}\left(\sum_{n=1}^N x_n = i | x_1 = 1\right)}{
    \mathbb{P}\left(\sum_{n=1}^N x_n = i \right)}.
\end{align}
Note that the denominator is $q(i,1)$ 
and the numerator is $p_1 q(i-1,2)$.
%
Similarly for $n\in\{2,\ldots,N-1\}$,
\begin{align}
  \mathbb{P}\left(x_n = 1 |x_{1},\ldots, x_{n-1}, \sum_{n=1}^N x_n = i\right) &=
  \frac{\mathbb{P}\left(x_n = 1\right)\mathbb{P}\left(\sum_{m=n}^N x_m=i-i_{n-1} | x_n = 1\right)}
	{\mathbb{P}\left(\sum_{m=n}^N x_m=i-i_{n-1}\right)},
\end{align}
with $i_n=\sum_{m=1}^n x_m$. The numerator can be recognized as $p_n q(i-i_{n-1}-1,n+1)$
and the denominator as $q(i-i_{n-1},n)$. 
Lastly, given $x_{1},\ldots,x_{N-1},\sum_{n=1}^N x_n =i$, we can set $x_N$ to zero or one
deterministically, namely $x_N = i - i_{N-1}$. 
%

\section{Contractive coupling}

\subsection{Path coupling\label{sec:pathcoupling}}

Given current states $(x,\tilde{x})\in\mathbb{X}^2$, let $(x',\tilde{x}') \sim \bar{P}((x,\tilde{x}),\cdot)$ denote 
new states sampled from $\bar{P}((x,\tilde{x}),\cdot)$, a coupling of $P(x,\cdot)$
and $P(\tilde{x},\cdot)$. 
We want to establish the contraction 
\begin{align}\label{eqn:contraction_allpairs}
	\mathbb{E}\left[d(x',\tilde{x}')|x,\tilde{x}\right]\leq (1-c) d(x,\tilde{x}),	
	\quad \forall (x,\tilde{x})\in\mathbb{X}^2,	
\end{align}
with a large contraction rate $c\in(0,1)$. 
The path coupling argument
\citep{bubley1997path,guruswami2000rapidly} 
allows us to reduce the task in \eqref{eqn:contraction_allpairs} to   
contraction from pairs of adjacent states, i.e. 
\begin{align}\label{eqn:contraction_adjpairs}
	\mathbb{E}\left[d(x',\tilde{x}')|x,\tilde{x}\right]\leq (1-c) d(x,\tilde{x}),	
	\quad \forall (x,\tilde{x})\in \adjset.
\end{align}

It operates as follows. Suppose that \eqref{eqn:contraction_adjpairs} holds 
under the coupling $\bar{P}$. 
For two arbitrary states $(x,\tilde{x})\notin\adjset$ with $d(x,\tilde{x})=D>2$,
we consider a ``path'' $x=z_{0},z_{1},\ldots,z_{L}=\tilde{x}$ of
$L=D/2$ adjacent elements (i.e. $d(z_{\ell},z_{\ell+1})=2$ for $\ell=0,\ldots,L-1$).
By construction the sum $\sum_{\ell=1}^L d(z_{\ell-1},z_{\ell})$ equals $D$.
As there could be multiple such paths, to remove any ambiguity,
we define a deterministic path by going through $x$ and $\tilde{x}$ from left to right,
introducing a new element in the path for each encountered discrepancy. 
We then generate the new states $(x',\tilde{x}')$ using the following procedure:
\begin{enumerate}
	\item sample $(z_0',z_{1}')\sim\bar{P}((z_0,z_{1}),\cdot)$,
	\item for $\ell=2,\ldots,L$, sample $z'_{\ell}$ from the conditional of
$\bar{P}((z_{\ell-1},z_{\ell}), (z'_{\ell-1},z_{\ell}'))$ given $z'_{\ell-1}$, 
	\item set $x' = z'_0$ and $\tilde{x}'=z'_{L}$.
\end{enumerate}
By construction we have $\tilde{x}'|\tilde{x}\sim P(\tilde{x},\cdot)$,
thus this scheme defines a coupling
of $P(x,\cdot)$ and $P(\tilde{x},\cdot)$. 
Under the above coupling, we have 
\begin{align*}
\mathbb{E}\left[d(x',\tilde{x}')|x,\tilde{x}\right] & =\mathbb{E}\left[d(x',\tilde{x}')|x=z_{0},\ldots,z_{L}=\tilde{x}\right]\\
  \leq\mathbb{E}\left[\sum_{\ell=1}^{L}d(z'_{\ell-1},z'_{\ell})|x=z_{0},\ldots,z_{L}=\tilde{x}\right]
 & =\sum_{\ell=1}^{L}\mathbb{E}\left[d(z'_{\ell-1},z'_{\ell})|z{}_{\ell-1},z{}_{\ell}\right]
  \leq\sum_{\ell=1}^{L}(1-c)d(z_{\ell-1},z_{\ell}),
\end{align*}
for any $(x,\tilde{x})\notin\adjset$. 
The first equality holds because $(z_{\ell})$ is obtained deterministically
given $x,\tilde{x}$. The rest follow from triangle
inequalities, linearity of expectation,
conditional independencies between the variables introduced
in the coupling construction, and the assumption of contraction from adjacent
states in \eqref{eqn:contraction_adjpairs}. The last expression is equal to $(1-c)d(x,\tilde{x})$ by construction of the 
path. 
In summary, the path coupling argument allows us to extend contraction 
between adjacent states \eqref{eqn:contraction_adjpairs} to 
contraction for any pair of states \eqref{eqn:contraction_allpairs} with the same rate. 

\subsection{Contraction rate of $\bar{P}$\label{appx:contractionrate} for adjacent states}

We compute the contraction rate $c(x,\tilde{x})$ in \eqref{eq:contractionrate}
under $\bar{P}$, the coupling described in Section \ref{sec:contraction}.

The coupling of $(i_0,\tilde{i}_{0})$ is such that 
$\mathbb{P}(i_0=a,\tilde{i}_0=b) = (N-I)^{-1}$. Similarly the maximal coupling on $(i_1,\tilde{i}_{1})$
leads to $\mathbb{P}(i_1=b,\tilde{i}_1=a) = I^{-1}$.
Under that coupling, if
none of the indices $i_{0},\tilde{i}_{0},i_{1},\tilde{i}_{1}$ are
in $\{a,b\}$, then the proposed swaps will be either accepted or
rejected jointly and the distance $d(x,\tilde{x})$ will be unchanged.
We consider proposed swaps that could affect the discrepancy. 
\begin{enumerate}
\item The index $i_{0}$ can be equal to $a$ (with probability
  $(N-I)^{-1}$). In that case, $\tilde{i}_{0}$ must be equal to $b$,
which is the only index in $\tilde{S}_{0}$ that is not in $S_{0}$;
this comes from the maximal coupling strategy for sampling $(i_0, \tilde{i}_0)$.
The index $i_{1}$ can be equal to $b$ (with probability $I^{-1}$), 
in which case $\tilde{i}_{1}=a$ again due to the maximal coupling
strategy. The discrepancy is then reduced if exactly one of the two proposed swaps
is accepted, which happens with probability 
\[
\left|\min\left(1,\frac{w_{i_{0}}}{w_{i_{1}}}\right)-\min\left(1,\frac{w_{\tilde{i}_{0}}}{w_{\tilde{i}_{1}}}\right)\right|
=|1 - w_a / w_b|.
\]
\item The index $i_{0}$ can be equal to $a$ (again with probability $(N-I)^{-1}$)
  and $i_{1}$ not equal to $b$ (with probability $(I-1)I^{-1}$).
Then $\tilde{i}_{1}=i_{1}$ (maximum coupling), and given $i_{1}\neq b$,
the discrepancy is reduced if both proposed swaps are accepted. 
The probability of reducing the discrepancy given $\{i_{0}=a,i_{1}\neq b$\}
is 
\[
\frac{1}{I-1}\sum_{i_{1}\in S_{1}\cap\tilde{S}_{1}}\min\left(1,\frac{w_{a}}{w_{i_{1}}}\right).
\]
\item The index $i_{0}$ can be different from $a$, in which case $\tilde{i}_{0}=i_{0}$,
and the discrepancy might be reduced if $i_{1}=b$,
$\tilde{i}_{1}=a$ and both swaps are accepted. 
Given $\{i_{0}\neq a,\tilde{i}_{1}=b\}$
this occurs with probability 
\[
\frac{1}{N-I-1}\sum_{i_{0}\in S_{0}\cap\tilde{S}_{0}}\min\left(1,\frac{w_{i_{0}}}{w_{b}}\right).
\]
\end{enumerate}
The contraction rate in \eqref{eq:contractionrate} follows from summing up the above three
possibilities.

\section{Estimation of upper bounds on the mixing time  \label{appx:empiricalmixing}}

We briefly describe the choices made in applying the $L$-lag coupling approach of
\citet{biswas2019estimating}.

For the choice of coupling, we implemented the kernel $\bar{P}$ presented in Section
\ref{sec:contraction}. A careful implementation of the kernel, by keeping track of
the four sets $S_{ij}$ of indices $n$ such that $x_n =i,\tilde{x}_n=j$ for $i,j\in\{0,1\}$,
results in a constant cost per iteration of the coupled chain. 
The chains are initialized by sampling $I$ indices without replacement in binary vectors
of length $N$, and setting these components to one and the others to zero.

We use a lag of $L=1$, and run $500$ independent
runs of coupled lagged chains, to obtain as many realizations of the meeting time $\tau$. 
We employ the key identity in \citet{biswas2019estimating}, 
\[\|x^{(t)}-\CB(p,I)\|_{\text{TV}}\leq \mathbb{E}\left[\max\left(0,\left\lceil \frac{\tau - L - t}{L}\right\rceil\right)\right].\]
The expectation on the right hand side is estimated by
an average of independent copies of the meeting time,
for any desired iteration $t$. As the estimate is itself decreasing in $t$,
we can find the smallest $t$ such that the estimate
is less than $\epsilon=0.01$, and this provides an estimated 
upper bound on the $\epsilon$-mixing time. 

\section{Proofs \label{appx:proofs}}

\subsection{Proof of Proposition \ref{prop:transitionfavorable} \label{appx:proofs:transition}}

Under the assumptions,
we define $0<w_{lo}<w_{hi}<\infty$ and $\delta>0$
such that $w_{lo}+\delta = l$ and $w_{hi}-\delta = r$, with $(l,r)$ as the interval in Assumption
\ref{def:reasonableodds}. The assumption thus guarantees
that with high probability, the number of odds in $(w_n)$ that are outside of 
$(w_{lo}+\delta, w_{hi}-\delta)$ is less than $\zeta N$. 
In particular, the number of odds above $w_{hi}$,
and the number of odds below $w_{lo}$ are both less than $\zeta N$.
The $\delta$ term can be arbitrarily small and is helpful in a calculation
below.

\emph{Proof of \eqref{eq:transit:favtomeet}.} We start with the transition from $\favset$ to $\diagset$.
Assume $(x,\tilde{x})\in \favset$.
For such states, $w_a > w_{lo}$ or $w_b < w_{hi}$,
therefore 
the contraction rate $c(x,\tilde{x})$ \eqref{eq:contractionrate} is at least
\begin{align*}
  &\frac{1}{I(N-I)}\left\{ \sum_{i_1\in S_1\cap \tilde{S}_1} \min(1, w_{lo}/w_{i_1}) \right\} \quad \text{ if $w_a > w_{lo}$,}
\\
&
\frac{1}{I(N-I)}\left\{ 
  \sum_{i_0\in S_0\cap\tilde{S}_0} \min(1, w_{i_0}/w_{hi})\right\} \quad \text{ if $w_b < w_{hi}$}. 
\end{align*}

\begin{itemize}
\item First case ($w_a > w_{lo}$):
  Note that there are $I-1$ indices in $S_1 \cap \tilde{S}_1$.
  Using Assumption \ref{def:ones}, $I$ is at least $\xi N$ with high 
  probability. 
  Using Assumption \ref{def:reasonableodds}, 
  the number of odds in $(w_n)$ above $w_{hi}$ is less than $\zeta N$.
  On the intersection of events, which is not empty if $N$ is large enough,
  among the $I-1$ entries in $S_1 \cap \tilde{S}_1$, 
  there are at least $(\xi - \zeta)N-1$ odds that are smaller than $w_{hi}$.
  The sum $\sum_{i_1\in S_1\cap \tilde{S}_1} \min(1, w_{lo}/w_{i_1})$ is thus 
  larger than $((\xi-\zeta)N-1)w_{lo}/w_{hi}$.
  We obtain the lower bound 
  \[\bar{P}((x,\tilde{x}), \diagset) \geq 2(1-\xi)^{-1}N^{-2} \cdot ((\xi-\zeta) N-1)w_{lo}/w_{hi} .\]
\item Second case ($w_b < w_{hi}$):
  In that case, among the $N-I-1$ indices in $S_0\cap \tilde{S}_0$, 
  under the assumptions there are at least $(1/2-\zeta)N-1$ components of $(w_n)$
  that are larger than $w_{lo}$. Thus 
  the sum $\sum_{i_0\in S_0\cap\tilde{S}_0} \min(1, w_{i_0}/w_{hi})$
  is larger than $((1/2-\zeta)N-1)w_{lo}/w_{hi}$, and we obtain the lower bound
  \[\bar{P}((x,\tilde{x}), \diagset) \geq 
  2(1-\xi)^{-1}N^{-2} \cdot ((1/2-\zeta) N-1)w_{lo}/w_{hi} .\]
\end{itemize}

From the two cases, we obtain for $N$ large enough
a lower bound of the form $\xifavtomeet/N$ for some $\xifavtomeet>0$.

\emph{Proof of \eqref{eq:transit:unfavtofav}.} We next consider the probability of transitioning from $\unfavset$ to
$\favset$. For any $(x,\tilde{x})\in\unfavset$, such transitions occurs in two 
distinct cases. 
\begin{itemize}
  \item We propose swapping component $i_0 = a$ and $i_1 \neq b$, and $\tilde{i}_0 = b$ and $\tilde{i}_1=i_1$,
    such that $w_{i_1} < w_{hi}$, and exactly one of the two swaps is accepted. In that case $b$ becomes $i_1$. 
  \item We propose swapping component $i_1=b$ and $i_0\neq a$, and $\tilde{i}_0 = i_0$ and $\tilde{i}_1 = a$, such
that $w_{i_0} > w_{lo}$, and exactly one of the two swaps is accepted. In that case $a$ becomes $i_0$.
\end{itemize}
Note that the case of swapping components $a$ and $b$ results in an unfavorable state, 
upon relabeling of the states $x$ and $\tilde{x}$ to maintain $w_a \leq w_b$. 
As we only seek a lower bound of $\xiunfavtofav/N$ in \eqref{eq:transit:unfavtofav},
it is sufficient to only consider the first case. 
Since $w_a \leq w_b$, if only one swap is accepted, it must be the one on the $\tilde{x}$
chain; then $b$ becomes $i_1$ and $a$ remains unchanged. 

Selection of $i_0 = a$ and $i_1\neq b$ occurs with probability 
$(N-I)^{-1}\times (I-1)I^{-1}$.
Acceptance of exactly one swap occurs with probability $|\min(1,w_a/w_{i_1}) - \min(1,w_{b}/w_{i_1})|$.
Thus the probability of
moving to $\favset$ via the acceptance of one swap is at least
\begin{align*}
  &\frac{1}{(N-I)I} \sum_{i_1\in S_1 \setminus \{b\}} |\min(1,w_a/w_{i_1}) - \min(1,w_{b}/w_{i_1})| \mathds{1}(w_{i_1} < w_{hi})\\
  &=\frac{1}{(N-I)I} \sum_{i_1\in S_1 \setminus \{b\}} |1 - \min(1,w_a/w_{i_1})| \mathds{1}(w_{i_1} < w_{hi}) \quad 
  \text{because $w_b>w_{hi}$}\\
  &\geq 
  \frac{1}{(N-I)I} \sum_{i_1\in S_1 \setminus \{b\}} |1 - \min(1,w_a/w_{i_1})| \mathds{1}(w_{lo} + \delta < w_{i_1} < w_{hi}) \quad 
  \text{because fewer terms}\\
  &\geq 
  \frac{1}{(N-I)I} \sum_{i_1\in S_1 \setminus \{b\}} |1 - w_a/(w_{lo} + \delta)| \mathds{1}(w_{lo} + \delta < w_{i_1} < w_{hi})\\
  &\geq \frac{1}{(N-I)I} ((\xi-\zeta)N-1) |1 - w_a/(w_{lo} + \delta)|.
\end{align*}
The last inequality holds when $I\geq \xi N$,
and when at most $\zeta N$ entries of $(w_{n})$ are outside of
$(w_{lo}+\delta,w_{hi})$; again we work in the intersection
of the high probability events specified by the assumptions. 

We conclude by noting that, for $w_a < w_{lo}$, we have 
$|1 - w_a/(w_{lo} + \delta)| \geq \delta / (w_{lo}+\delta)$,
which is a constant independent of $N$; this is where the $\delta$ term comes in handy.
Thus for $(x,\tilde{x})\in \unfavset$, we can move to $\favset$ with probability 
\[
\bar{P}((x,\tilde{x}), \favset) \geq
2 (1-\xi)^{-1}N^{-2} \cdot ((\xi-\zeta) N-1) \delta / (w_{lo}+\delta),\]
which is at least $\xiunfavtofav/N$ for some $\xiunfavtofav>0$ as $N$ gets large.

\emph{Proof of \eqref{eq:transit:favtounfav}.}
We finally consider the probability of moving from $\favset$ to $\unfavset$.
As we want an upper bound of this quantity, we have to consider all possible routes 
from $(x,\tilde{x})\in\favset$ to $\unfavset$. 
If the state is such that $w_a > w_{lo}$ and $w_b < w_{hi}$,
then the pair cannot transition to an unfavorable state. In other words, $\bar{P}((x,\tilde{x}), \unfavset)$ can be equal to 
zero.
Transition to an unfavorable state occurs in two distinct cases:
\begin{itemize}
\item if $w_a < w_{lo}$, $w_b < w_{hi}$, and if the swap changes $b$ to some $i_1$ with $w_{i_1}>w_{hi}$;
\item if $w_b > w_{hi}$, $w_a > w_{lo}$, 
and if the swap changes $a$ to some $i_0$ with $w_{i_0}<w_{lo}$.
\end{itemize}
The first case happens
if the drawn indices are $(i_0,\tilde{i}_0) = (a,b)$ and $i_1 = \tilde{i}_1$ in $S_1\setminus\{b\}$ such that $w_{i_1}>w_{hi}$,
and if we accept the swap for the chain $\tilde{x}$ but not for $x$,
which occurs with probability $(w_b - w_a)/w_{i_1}$.
Thus the probability associated with this transition is 
\[\frac{1}{(N-I)I} \left\{\sum_{i_1\in S_1: w_{i_1} > w_{hi}} (w_b-w_a)/w_{i_1} \right\} 
\le \frac{1}{(N-I)I} \sum_{i_1\in S_1} \mathds{1}(w_{i_1}>w_{hi})
\le 2 \xi^{-1} N^{-2}\cdot \zeta N,\]
in the event that the number of odds above $w_{hi}$ is less than $\zeta N$.

The second case occurs if the drawn indices are $i_0 =\tilde{i}_0$ in $S_0\setminus\{a\}$ with
$w_{i_0} < w_{lo}$ and $(i_1,\tilde{i}_1) = (b,a)$,
and if we accept the swap for the chain $\tilde{x}$ but not for $x$,
which occurs with probability $w_{i_0}\cdot(w_a^{-1} - w_b^{-1})\leq 1$.
The probability associated with this transition is at most
$$
	\frac{1}{(N-I)I}\left\{ \sum_{i_0 \in S_0: w_{i_0} < w_{lo}}  w_{i_0}\cdot(w_a^{-1} - w_b^{-1}) \right\} \le \frac{1}{(N-I)I} \sum_{i_0 \in S_0} \mathds{1}(w_{i_0} < w_{lo}) \le 2 \xi^{-1} N^{-2}\cdot \zeta N,
$$
in the event that fewer than $\zeta N$ odds are below $w_{lo}$.
Thus for $(x,\tilde{x})\in\favset$, we can upper bound $\bar{P}((x,\tilde{x}),\unfavset)$ by $\xifavtounfav/N$ for some $\xifavtounfav>0$.

\subsection{Convergence of the Markov chain with transition $Q$ \label{appx:convergenceQ}}

The second largest left-eigenvalue of $Q$ in \eqref{eq:transition3states} is 
\[1 - \frac{1}{2N}\left(\xifavtomeet+\xifavtounfav+\xiunfavtofav - \sqrt{\left(\xifavtomeet+\xifavtounfav+\xiunfavtofav\right)^2 - 4 \xifavtomeet\xiunfavtofav} \right).\]
This is of order $1-\alpha /N$ for a positive constant $\alpha$, for all $\xifavtomeet,\xiunfavtofav,\xifavtounfav>0$.
Using the fact that $(1-\alpha/N)^N$ is less than $\exp(-\alpha)$ for all $\alpha>0$,
we can lower bound the probability 
$\mathbb{P}(Z^{(N)}=3|Z^{(0)}=1)$ by a constant independent of $N$.

\subsection{Construction of the auxiliary chain \label{appx:constructionauxprocess}}

We now detail the construction of the auxiliary chain $(Z^{(t)})$. The construction is done 
conditionally on $(x^{(t)},\tilde{x}^{(t)})$. 
We refer readers to Figure \ref{fig:Zdependencies} 
and recall that $B^{(t)}$ is a deterministic function of $(x^{(t)},\tilde{x}^{(t)})$.
Note that the dependencies shown in Figure \ref{fig:Zdependencies} are a consequence
of the following construction. 

First, if $Z^{(t)}=1$, we construct $Z^{(t+1)}$ as follows.
\begin{itemize}
  \item If  $B^{(t)} = 1$ or $B^{(t)} = 2$, 
    \begin{itemize}
      \item if $B^{(t+1)}=1$ set $Z^{(t+1)}=1$, 
      \item otherwise, set  $Z^{(t+1)} = 2$ with probability $(\xiunfavtofav/N) /
    \mathbb{P}(B^{(t+1)}\in \{2,3\}|x^{(t)},\tilde{x}^{(t)})$, and set $Z^{(t+1)} = 1$ otherwise.
    \end{itemize}
  \item If $B^{(t)} = 3$,  sample $Z^{(t+1)}$ from $\{1,2,3\}$ using the probabilities in the first row of \eqref{eq:transition3states}.
\end{itemize} 

Let us check that the above transition probabilities are well-defined and lie in $[0,1]$.
If $B^{(t)}=1$, 
$(\xiunfavtofav/N) / \mathbb{P}(B^{(t+1)}\in \{2,3\}|x^{(t)},\tilde{x}^{(t)})$
is less than 
$(\xiunfavtofav/N) / \mathbb{P}(B^{(t+1)} = 2|x^{(t)},\tilde{x}^{(t)})$
which is less than one by Proposition \ref{prop:transitionfavorable}, Equation \eqref{eq:transit:unfavtofav}.
If $B^{(t)}=2$, 
$(\xiunfavtofav/N) / \mathbb{P}(B^{(t+1)}\in \{2,3\}|x^{(t)},\tilde{x}^{(t)})$
is less than one if $N$ is large enough, using Proposition \ref{prop:transitionfavorable} again.
Indeed 
\[\mathbb{P}(B^{(t+1)} \in \{2,3\} |x^{(t)},\tilde{x}^{(t)}) = 1-\mathbb{P}(B^{(t+1)} = 1|x^{(t)},\tilde{x}^{(t)})\geq 1-\xifavtounfav/N,\]
by Equation \eqref{eq:transit:favtounfav}, and this is larger than
$\xiunfavtofav/N$ 
if $N$ is large enough, for any
$\xiunfavtofav,\xifavtounfav$.

The goal of this construction is that $Z^{(t+1)} = 2$ only if $B^{(t+1)} \in \{2,3\}$, so that 
$Z^{(t+1)}\leq B^{(t+1)}$ holds almost surely. 
We can compute 
\begin{align*}
  \mathbb{P}(Z^{(t+1)}=2|Z^{(t)}=1, x^{(t)},\tilde{x}^{(t)}) &= 
  \mathbb{P}(Z^{(t+1)}=2,B^{(t+1)}=1|Z^{(t)}=1, x^{(t)},\tilde{x}^{(t)})
 \\
  & + 
  \mathbb{P}(Z^{(t+1)}=2,B^{(t+1)}\in \{2,3\}|Z^{(t)}=1, x^{(t)},\tilde{x}^{(t)}),
\end{align*}
which, using the conditional dependencies implied by the construction
is equal to 
\begin{align*}
  0+&\mathbb{P}(Z^{(t+1)}=2|B^{(t+1)}\in\{2,3\},Z^{(t)}=1,  x^{(t)},\tilde{x}^{(t)} )\times
\mathbb{P}(B^{(t+1)}\in\{2,3\} |x^{(t)},\tilde{x}^{(t)})=\xiunfavtofav/N,
\end{align*}
for all values of $B^{(t)}$ in $\{1,2,3\}$. 
Since the probability $\mathbb{P}(Z^{(t+1)}=2|Z^{(t)}=1, x^{(t)},\tilde{x}^{(t)})$ is the same for all 
$(x^{(t)},\tilde{x}^{(t)})$, we deduce that
$\mathbb{P}(Z^{(t+1)}=2|Z^{(t)}=1)=\xiunfavtofav/N$.

We proceed similarly for the second row of \eqref{eq:transition3states},
assuming $Z^{(t)}=2$. In that case we must have $B^{(t)}\geq 2$.
Consider the following construction.
\begin{itemize}
  \item If $B^{(t)}=2$,
    \begin{itemize}
      \item if $B^{(t+1)} = 1$, set $Z^{(t+1)}=1$,
      \item if $B^{(t+1)} = 2$, sample $Z^{(t+1)}$ from $\{1,2,3\}$
        with probabilities
        \[\left(\frac{\xifavtounfav/N - \mathbb{P}(B^{(t+1)}=1|x^{(t)},\tilde{x}^{(t)})}{
          \mathbb{P}(B^{(t+1)} = 2|x^{(t)},\tilde{x}^{(t)})},
1- \frac{\xifavtounfav/N - \mathbb{P}(B^{(t+1)}=1|x^{(t)},\tilde{x}^{(t)})}{\mathbb{P}(B^{(t+1)} = 2|x^{(t)},\tilde{x}^{(t)})},
0
        \right),\]
      \item if $B^{(t+1)} = 3$, sample $Z^{(t+1)}$ from $\{1,2,3\}$ 
        with probabilities 
        \[\left(0,
1- \frac{\xifavtomeet/N}{\mathbb{P}(B^{(t+1)} = 3|x^{(t)},\tilde{x}^{(t)})},
\frac{\xifavtomeet/N}{\mathbb{P}(B^{(t+1)} = 3|x^{(t)},\tilde{x}^{(t)})}
        \right).\]
    \end{itemize}
  \item If $B^{(t)} = 3$, sample $Z^{(t+1)}$ from $\{1,2,3\}$ using the probabilities in the second row of \eqref{eq:transition3states}. 
\end{itemize}

We can again verify that the probabilities are well-defined and lie in $[0,1]$,
using Proposition \ref{prop:transitionfavorable} and assuming that $N$ is large enough
so that $\mathbb{P}(B^{(t+1)} = 2|x^{(t)},\tilde{x}^{(t)}) + \mathbb{P}(B^{(t+1)} = 1|x^{(t)},\tilde{x}^{(t)}) \ge \xifavtounfav / N $.

Then we can compute 
\begin{align*}
   \mathbb{P}(Z^{(t+1)}=1|Z^{(t)}=2, x^{(t)},\tilde{x}^{(t)}) &= 
\mathbb{P}(Z^{(t+1)}=1,B^{(t+1)}=1|Z^{(t)}=2, x^{(t)},\tilde{x}^{(t)}) \\
&+ \mathbb{P}(Z^{(t+1)}=1,B^{(t+1)} = 2|Z^{(t)}=2, x^{(t)},\tilde{x}^{(t)}) + 0.
\end{align*}
If $B^{(t)}=2$, this becomes
\begin{align*}
& 1\times \mathbb{P}(B^{(t+1)}=1|x^{(t)},\tilde{x}^{(t)}) \\
+& \left(\frac{\xifavtounfav/N - \mathbb{P}(B^{(t+1)}=1|x^{(t)},\tilde{x}^{(t)})}{ \mathbb{P}(B^{(t+1)}=2|x^{(t)},\tilde{x}^{(t)})} \right) 
\times \mathbb{P}(B^{(t+1)}=2|x^{(t)},\tilde{x}^{(t)}) \\
+& 0 \times \mathbb{P}(B^{(t+1)}=3|x^{(t)},\tilde{x}^{(t)}) = \xifavtounfav/N.
\end{align*}
If $B^{(t)}=3$, we also have $\mathbb{P}(Z^{(t+1)}=1|Z^{(t)}=2, x^{(t)},\tilde{x}^{(t)}) = \xifavtounfav/N$.

We next compute the transition from state $2$ to state $3$,
\begin{align*}
   \mathbb{P}(Z^{(t+1)}=3|Z^{(t)}=2, x^{(t)},\tilde{x}^{(t)}) &= 
\mathbb{P}(Z^{(t+1)}=3,B^{(t+1)}=1|Z^{(t)}=2, x^{(t)},\tilde{x}^{(t)}) \\
&\;+\; \mathbb{P}(Z^{(t+1)}=3,B^{(t+1)}=2|Z^{(t)}=2, x^{(t)},\tilde{x}^{(t)}) \\
&\;+\; \mathbb{P}(Z^{(t+1)}=3,B^{(t+1)}=3|Z^{(t)}=2, x^{(t)},\tilde{x}^{(t)}).
\end{align*}
If  $B^{(t)}=2$, this becomes
\begin{align*}
 & 0 \times \mathbb{P}(B^{(t+1)}=1|x^{(t)},\tilde{x}^{(t)})  \\
&\; + \; 0 \times \mathbb{P}(B^{(t+1)}=2|x^{(t)},\tilde{x}^{(t)}) \\
&\; + \;  \frac{\xifavtomeet/N}{\mathbb{P}(B^{(t+1)}=3|x^{(t)},\tilde{x}^{(t)})}  \times \mathbb{P}(B^{(t+1)}=3|x^{(t)},\tilde{x}^{(t)}) = \xifavtomeet/N.
\end{align*}
If $B^{(t)}=3$, we also find that $\mathbb{P}(Z^{(t+1)}=3|Z^{(t)}=2, x^{(t)},\tilde{x}^{(t)}) = \xifavtomeet/N$.
Again these probabilities do not depend on $B^{(t)}$ or $(x^{(t)},\tilde{x}^{(t)})$,
thus the evolution of the chain $(Z^{(t)})$ given $Z^{(t)}=2$ is described
by the second row of \eqref{eq:transition3states}.

\subsection{Upper bound on mixing time \label{appx:upperboundmixing}}

Using the auxiliary chain $(Z^{(t)})$ we 
can state the following result about the $N$-th iteration of $\bar{P}$ from adjacent states.

\begin{proposition} \label{prop:contract}
  Under Assumptions \ref{def:reasonableodds} and \ref{def:ones} 
  such that $\zeta < \xi$, 
  we can define $r\in(0,1)$, $\nu >0$ and $N_0\in\mathbb{N}$ (independent of $N$) such that, for all $N\geq N_0$, 
  with probability at least $1-\exp(-\nu N)$,
  under the coupled kernel $\bar{P}$ 
  \[\mathbb{E}[ d(x^{(N)},\tilde{x}^{(N)})| x^{(0)} = x, \tilde{x}^{(0)} = \tilde{x}] \leq r\, d(x,\tilde{x}) = 2r,\]
  for any two states $(x,\tilde{x})\in\adjset$.
\end{proposition}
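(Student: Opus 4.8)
The plan is to use the auxiliary chain $(Z^{(t)})$ constructed in Appendix \ref{appx:constructionauxprocess} to bound the meeting probability of the coupled chains after $N$ steps, and then translate this into an expected‑distance bound. First I would start the coupled chain $(x^{(t)},\tilde{x}^{(t)})$ from an arbitrary pair $(x,\tilde{x})\in\adjset$; then $d(x,\tilde{x})=2$, so $B^{(0)}\in\{1,2\}$ depending on whether the pair is unfavorable or favorable. I initialize the auxiliary chain with $Z^{(0)}=1$ in either case, which is legitimate since $Z^{(0)}=1\le B^{(0)}$. By the construction in Appendix \ref{appx:constructionauxprocess}, $(Z^{(t)})$ is marginally a Markov chain with transition matrix $Q$ from \eqref{eq:transition3states}, and it satisfies $Z^{(t)}\le B^{(t)}$ almost surely for all $t$. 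In particular $\{Z^{(N)}=3\}\subseteq\{B^{(N)}=3\}=\{x^{(N)}=\tilde{x}^{(N)}\}$.

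Next I would invoke the bound on $Q$ from Appendix \ref{appx:convergenceQ}: there is $r\in(0,1)$ independent of $N$ with $\mathbb{P}(Z^{(N)}=3\mid Z^{(0)}=1)\ge 1-r$, because the relevant sub‑dominant eigenvalue of $Q$ is of the form $1-\alpha/N$ and $(1-\alpha/N)^N\le e^{-\alpha}$. Combining this with the domination $Z^{(t)}\le B^{(t)}$ gives $\mathbb{P}(x^{(N)}=\tilde{x}^{(N)}\mid x^{(0)}=x,\tilde{x}^{(0)}=\tilde{x})\ge 1-r$ for every $(x,\tilde{x})\in\adjset$. Since $\bar{P}$ never increases the Hamming distance and the chains stay in $\mathbb{X}$, starting from distance $2$ the distance $d(x^{(N)},\tilde{x}^{(N)})$ takes values in $\{0,2\}$ only (it can only go down, and down by an even amount). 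Hence
\begin{align*}
  \mathbb{E}[d(x^{(N)},\tilde{x}^{(N)})\mid x^{(0)}=x,\tilde{x}^{(0)}=\tilde{x}]
  &= 2\,\mathbb{P}(d(x^{(N)},\tilde{x}^{(N)})=2\mid x^{(0)}=x,\tilde{x}^{(0)}=\tilde{x}) \\
  &= 2\,\mathbb{P}(x^{(N)}\neq\tilde{x}^{(N)}\mid x^{(0)}=x,\tilde{x}^{(0)}=\tilde{x})
  \le 2r = r\,d(x,\tilde{x}),
\end{align*}
which is exactly the claimed inequality.

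The point I expect to require the most care is ensuring that the whole argument runs on a single high‑probability event for $(w_n)$ and $I$. Proposition \ref{prop:transitionfavorable} holds with probability at least $1-\exp(-\nu N)$ for $N\ge N_0$, and all the constants $\xifavtomeet,\xiunfavtofav,\xifavtounfav,w_{lo},w_{hi}$ (hence also the entries of $Q$ and the constant $r$) are those supplied by that proposition; I would state the result conditionally on that same event, with the same $\nu$ and (possibly enlarged) $N_0$, so that positivity of every entry of $Q$ and of the auxiliary‑chain transition probabilities in Appendix \ref{appx:constructionauxprocess} is guaranteed. The remaining subtlety is purely bookkeeping: checking that $Z^{(0)}=1\le B^{(0)}$ in the favorable case (which is fine, since $1<2$), that the dependency structure of Figure \ref{fig:Zdependencies} legitimately yields the marginal law $Q$ for $(Z^{(t)})$ regardless of the realized $B^{(t)}$‑path, and that the coupling indeed keeps both marginals distributed as $P$ so that the meeting of $(x^{(t)},\tilde{x}^{(t)})$ is meaningful. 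None of these steps involves a hard calculation; the substance has already been packaged into Proposition \ref{prop:transitionfavorable} and the eigenvalue computation for $Q$.
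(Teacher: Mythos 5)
Your proposal is correct and follows essentially the same route as the paper's own proof: initialize the auxiliary chain at $Z^{(0)}=1$, use the domination $Z^{(t)}\leq B^{(t)}$ together with the bound $\mathbb{P}(Z^{(N)}=3\mid Z^{(0)}=1)\geq 1-r$ from Appendix \ref{appx:convergenceQ}, all on the high-probability event of Proposition \ref{prop:transitionfavorable}, and conclude via $\mathbb{E}[d(x^{(N)},\tilde{x}^{(N)})]=2\,\mathbb{P}(x^{(N)}\neq\tilde{x}^{(N)})$. Your added remarks on the distance taking values only in $\{0,2\}$ and on working within a single high-probability event are consistent with, and slightly more explicit than, the paper's argument.
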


\begin{proof}[Proof of Proposition \ref{prop:contract}]
  Under the assumptions we can apply Proposition \ref{prop:transitionfavorable}.
  We place ourselves on the large probability event from that proposition.
  This gives us the constants needed to define the transition matrix $Q$
  in \eqref{eq:transition3states},
  the Markov chain $(Z^{(t)})$ with transition \eqref{eq:transition3states}
  and such that $Z^{(t)}\leq B^{(t)}$ for all $t\geq 0$ (almost surely),
  and the constant $r\in(0,1)$
  such that $\mathbb{P}(Z^{(N)}=3|Z^{(0)}=1)\geq 1-r$.
  Based on the construction, $\mathbb{P}(x^{(N)}=\tilde{x}^{(N)})\geq
  \mathbb{P}(Z^{(N)}=3| Z^{(0)}=1)$,
  thus $\mathbb{P}(x^{(N)}=\tilde{x}^{(N)}| x^{(0)} = x, \tilde{x}^{(0)} = \tilde{x})\geq 1-r$. 
  Noting that $\mathbb{E}[d(x^{(N)}, \tilde{x}^{(N)})| x^{(0)} = x, \tilde{x}^{(0)} = \tilde{x}] = 2\mathbb{P}(x^{(N)}\neq \tilde{x}^{(N)}| x^{(0)} = x, \tilde{x}^{(0)} = \tilde{x})$ concludes the proof.
\end{proof}

We can finally return to the path coupling argument, and apply it to a chain
that follows $P^N$, the $N$-th iterate of the transition kernel of the
original chain.  From Proposition \ref{prop:contract},
we have a contraction rate of $r\in(0,1)$ independently of $N$,
for a coupling of $P^N$
from adjacent states, and we obtain the main theorem as
follows.

\begin{proof}[Proof of Theorem \ref{thm:convergencerate}]
    The path coupling argument shows that 
    for a chain $(\check{x}^{(t)})$ evolving according to $P^N$, 
    there exist $\kappa,\nu >0$ such that for any $\epsilon>0$,
  with probability at least $1-\exp(-\nu N)$,
  we have 
  \[\|\check{x}^{(t)} - \CB(p,I)\|_{\mathrm{TV}}\leq \epsilon \quad\text{ for all }\quad t\geq \kappa \log (N/\epsilon).\]
  The variable $\check{x}^{(t)}$ has the same law as $x^{(tN)}$,
  thus with a change of time variable, $s=tN$, we obtain
  \[\|{x}^{(s)} - \CB(p,I)\|_{\mathrm{TV}}\leq \epsilon \quad\text{ for all }\quad s\geq \kappa N \log (N/\epsilon).\]
\end{proof}

\end{document}